\tikzstyle{overbrace text style}=[font=\tiny, above, pos=.5, yshift=5pt]
\tikzstyle{overbrace style}=[decorate,decoration={brace,raise=5pt,amplitude=3pt}]
\newtheorem{theorem}{Theorem}[section]
\newtheorem{lemma}[theorem]{Lemma}
\theoremstyle{definition}
\newtheorem*{comment*}{Comment}
\newtheorem{remark}[theorem]{Remark}
\newcommand{\SW}{\text{\normalfont SW}}
\newcommand{\bv}{\mathbf{v}}
\newcommand{\info}{\text{info}}
\newcommand{\consistency}{\text{consistency}}
\newcommand{\plu}{\text{\normalfont plu}}
\newcommand{\robustness}{\text{robustness}}
\newcommand{\distortion}{\text{distortion}}
\renewcommand{\top}{\text{\normalfont top}}
\newcommand{\mech}{\mathcal{M}}
\newcommand{\reliable}{accurate\xspace}
\newcommand{\unreliable}{unreliable\xspace}
\newcommand{\ceil}[1]{\lceil {#1} \rceil}
\newcommand{\Mod}[1]{\ (\mathrm{mod}\ #1)}
\title{\bf Utilitarian Distortion with Predictions}
\author[1]{Aris Filos-Ratsikas}
\author[1]{Georgios Kalantzis}
\author[2]{Alexandros A. Voudouris}
\affil[1]{University of Edinburgh, UK}
\affil[2]{University of Essex, UK}
\date{}
\date{}
\begin{document}

\allowdisplaybreaks

\maketitle

\begin{abstract}
We study the utilitarian distortion of social choice mechanisms under the recently proposed learning-augmented framework where some (possibly unreliable) \emph{predicted information} about the preferences of the agents is given as input. In particular, we consider two fundamental social choice problems: \emph{single-winner voting} and \emph{one-sided matching}. In these settings, the ordinal preferences of the agents over the alternatives (either candidates or items) is known, and some prediction about their underlying cardinal values is also provided. The goal is to leverage the prediction to achieve improved distortion guarantees when it is accurate, while simultaneously still achieving reasonable worst-case bounds when it is not. This leads to the notions of \emph{consistency} and \emph{robustness}, and the quest to achieve the best possible tradeoffs between the two. We show \emph{tight} tradeoffs between the consistency and robustness of ordinal mechanisms for single-winner voting and one-sided matching, for different levels of information provided by as prediction. 
\end{abstract}

\section{Introduction}
The objective of social choice theory is to aggregate the preferences of individuals into a collective outcome which is aligned with the well-being of society. In this quest, a fundamental challenge is the accurate elicitation of those preferences, which might be constrained by several factors, such as cognitive or computational. The literature of \emph{distortion} in social choice theory \citep{procaccia2006distortion} over the past 20 years has been concerned with precisely this question, i.e., the effect that the limits on the amount of available information can have on the quality of the outcomes. Concretely, in the original version of the problem (as studied by \cite{boutilier2015optimal}), the intrinsic preferences of the participants, or \emph{agents}, are captured by \emph{cardinal valuation functions}\footnote{These can be interpreted as von Neumann-Morgenstern utilities \citep{von1944theory}.}, but their expressiveness is limited to \emph{ordinal preference rankings} consistent with these values. This is motivated by the fact that sorting different options is usually cognitively a much easier task compared to coming up with appropriate numerical values to assign to them. For a mechanism that operates under this ordinal information regime, the distortion measures (typically in a worst-case, multiplicative sense), how far the outcome computed by the mechanism is from the ideal outcome that would be possible with access to all the (cardinal) values. 

A recent line of work has interpreted distortion in a broader sense, where the limited information is not restricted to be of ordinal nature (see \citep[Section 5]{distortion-survey}). Specifically, these works assume that, in addition to the preference rankings, the mechanism has also access to some cardinal information about the values of the agents. The goal then is to achieve the best possible tradeoffs between the amount of cardinal information and the achievable distortion. This information is typically accessed via different types of queries \citep{ABFV21,amanatidis2022few,ABFV24,ma2021matching,caragiannis2023beyond,ES25} or the elicitation of threshold approval preferences \citep{bhaskar2018truthful,abramowitz2019awareness,benade2021preference,latifian2024approval,anshelevich2025improved}. Regardless of the method, the commonly presented motivation for this approach is that {\em eliciting a small amount of cardinal information is not very demanding, while it could potentially have wondrous effects on the social welfare}~\citep{amanatidis2022few}.

While certainly reasonable, this approach is based on the premise that agents will be able to express accurately their  values, as long as they are asked to do so for only a few outcomes. This is contingent on assumptions about the cognitive burden that the elicitation process imposes, which, in reality, highly depends on the application at hand. Ideally, we do not only want our mechanisms to take advantage of this cardinal information, but also be robust against possible errors in the agents' judgment, which could potentially render the values they report much different than their true ones. In simple words, we would like to design mechanisms with improved distortion bounds when some accurate cardinal information is provided, while at the same time do not (\emph{significantly}, or even \emph{at all}) compromise their worst-case distortion guarantees when this information is inaccurate. 

This very principle is the backbone of the literature on \emph{learning-augmented} algorithm design \citep{lykouris2021competitive}. In this framework, an algorithm, equipped with a (potentially unreliable) \emph{prediction} about some input parameters of the problem, aims to achieve improved levels of performance (\emph{consistency}) when this prediction is correct, while still achieving good worst-case bounds when it is not (\emph{robustness}). In the original motivation for this framework, the prediction was assumed to be provided by an appropriate machine learning model. However, the setup is actually much more general; indeed, the prediction can come from any source of potentially unreliable information. The elicitation of cardinal values in social choice settings is an inherently error-prone process, and as such, it is quite meaningful to interpret the outcomes of this elicitation as predictions. This brings us to the following central question studied in our work:
\begin{quote}
In social choice settings, where the input consists of the ordinal preferences of the agents and a \emph{prediction} about their cardinal values, what are the best possible tradeoffs between consistency and robustness?
\end{quote}

\subsection{Our Contribution}
We study the question above for two fundamental \emph{utilitarian} social choice settings, that of \emph{single winner voting} and \emph{one-sided matching}, when agents have unit-sum values, and the goal is to maximize the \emph{social welfare} (the total value of the agents). For both settings, we design mechanisms that achieve \emph{tight} tradeoffs between consistency and robustness, by also showing that no other mechanism can achieve better tradeoffs with the given prediction. 

In the the single-winner voting setting, where a set of $n$ voters aim to select one out of $m$ candidates, the distortion of the best purely ordinal deterministic mechanisms is known to be $\Theta(m^2)$ \citep{caragiannis2011voting,caragiannis2017subset}. We first show a stark impossibility: Any mechanism that has bounded robustness cannot achieve asymptotically sublinear distortion. This refines the search for good mechanisms to those that achieve linear distortion and robustness as close as possible to the aforementioned $\Theta(m^2)$ bound. To this end, we show that the best possible tradeoff is a consistency of $\Theta(\lambda m)$ and a robustness of $\Theta(m^3/\lambda)$, by providing a mechanism that achieves it, as well as a general asymptotically matching lower bound that applies to all mechanisms for this setting. It is worth noting that our mechanism only uses predicted information about the most-preferred candidate of each voter, whereas all of our lower bounds apply to the case where the whole valuation profile is provided as prediction; this makes our tradeoff bounds as strong as possible. We conclude our investigation for the single-winner voting setting with a refined analysis of our mechanism which bounds its distortion as a function of an appropriately defined \emph{prediction error}.

We next consider the one-sided matching setting, in which a set of $n$ agents express preferences over a set of $n$ items, and the goal is to come up with a matching of agents to items with as high social welfare as possible. In this setting, \citet{amanatidis2022few} proved that the best possible purely ordinal deterministic mechanism achieves a distortion of $\Theta(n^2)$. In the realm of mechanisms with predictions, we first observe that, if the prediction includes the whole valuation profile, then it is straightforward to achieve a consistency of $1$ while still matching the $\Theta(n^2)$ bound in terms of robustness. For this reason, we consider the case where the prediction is truncated to the values of the agents for their $k$ most-preferred items, and propose a mechanism which achieves consistency $O(n/k)$ and robustness $O(n^2)$; this is proved to be tight, i.e., no other mechanism can achieve asymptototically better consistency and robustness bounds. As in the case of single-winner voting, we provide a fine-grained analysis of the distortion of our mechanism in terms of a prediction error. 

Our results can be interpreted as follows. For voting, it is impossible to achieve consistency $o(m^2)$ while guaranteeing an ideal bound of $O(m^2)$ on the robustness. One can, however, achieve any possible consistency-robustness tradeoff between ($O(m^2),O(m^2)$) and ($O(m), O(m^3)$) by choosing different values for the parameter $\lambda$, depending on how much the prediction can be trusted. For one-sided matching, the main takeaway message is that improved distortion guarantees can be achieved when the prediction is accurate (where the improvement depends linearly on the amount of available predicted information, captured by the value of $k$), without compromising the $O(n^2)$ distortion bound when the prediction is inaccurate. 

Finally, we remark that our focus on \emph{deterministic} mechanisms (which select a single outcome rather than a probability distribution over outcomes) is because there are straighforward randomized mechanisms that trivially achieve the best possible consistency-robustness tradeoff in both settings. In particular, we can equiprobably select among the welfare-maximizing outcome (according to the given prediction) and the outcome of the best possible purely ordinal mechanism from the literature. If the prediction is accurate, the achieved social welfare is at least half that of the maximum possible, and if it is inaccurate, the social welfare is at least half of that of the best ordinal mechanism, which is within an optimal factor of $O(m^2)$ (for social choice) or of $O(n^2)$ (for one-sided matching) from the best possible. Since we are interested in asymptotic bounds, such an investigation is not meaningful. 

\subsection{Related Work and Discussion}
The single-winner voting setting is almost synonymous with the general social choice setting, as it can be used to model decision-making scenarios where a set of individuals collectively select an outcome; see \citep{brandt2012computational}. In the distortion literature, it has been the main setting of study since the inception of the framework, giving rise to a plethora of papers in both the utilitarian setting \citep{caragiannis2011voting,caragiannis2017subset,filos2014truthful,boutilier2015optimal} and the metric setting, where the cardinal information is captured by metric costs \citep{anshelevich2017randomized,anshelevich2018approximating,caragiannis2022multiwinner,charikar24breaking,feldman2016voting,gkatzelis2020resolving,kempe2022veto}. The most important results in this literature for us are the worst-case distortion bounds of $O(m^2)$ and $\Omega(m^2)$, proven by \citet{caragiannis2011voting} and \citet{caragiannis2017subset}, respectively. We refer to the survey of \citet{distortion-survey} for more details about other works. 

The one-sided matching setting was introduced (as a social choice problem) in the pioneering work of \citet{hylland1979efficient}, and was studied extensively ever since in economics and computer science; e.g., see \citep{abdulkadirouglu1998random,abdulkadirouglu2005boston,bogomolnaia2001new,Anshelevich2016truthful,Anshelevich2019bipartitematching,caragiannis2024augmentation}. In the literature of utilitarian distortion, it was notably studied by \citet{filos2014RSD}, who considered randomized mechanisms, and then by \cite{amanatidis2022few,ABFV24,ma2021matching} and \citet{ES25} in settings where \emph{reliable} cardinal information is also provided to the mechanism via queries. We remark that \cite{ABFV24} and \cite{ES25} in fact studied both single-winner voting and one-sided matching, as we do here; in a certain sense, these two settings can be considered as the canonical ones for utilitarian distortion. 

Besides the literature of distortion, our work obviously has connections to the literature on learning-augmented algorithms. As we mentioned before, the associated research agenda was explicitly put forward by \citet{lykouris2021competitive} to study problems where the prediction comes as advice for an appropriate machine learning model. The associated literature has revisited several classic algorithmic problems under this lens, including problems in data structures, online algorithms, approximation algorithms, combinatorial optimization, and sublinear algorithms, among others. We refer the reader to the survey of \citet{mitzenmacher2022algorithms} for a more in depth discussion of this literature, as well as the ``Algorithms with Predictions'' online repository \citep{ALPS}. 

In more economically oriented applications, the concept of predictions has been applied to many different problems related to mechanism design \citep{agrawal2022learning,BGT23,balkanski2024randomized,balkanski2023online, BGI24, christodoulou2024mechanism, QNS24}, bidding auctions \citep{XL22,lu2024competitive,caragiannis2024randomized,gkatzelis2025clock}, strategyproof mechanisms for learning and assignment problems \citep{BZ25,colini2024trust}, and cost sharing \citep{gkatzelis2022improved}. Most closely related to us is the work of \cite{BFGT24}, who studied of distortion of metric single-winner voting with predictions and showed tight consistency and robustness tradeoffs for that setting. The metric distortion setting is markedly different from the utilitarian setting that we study, and therefore their results do not have any implications in our setting. Still, it is interesting to note that in their case, the best possible tradeoff is achievable by mechanisms that only require information about the identity of the optimal candidate. In contrast, we show that such a prediction is insufficient to achieve any improvements over purely ordinal mechanisms. Still, similarly to \citet{BFGT24}, our mechanism that achieves the best possible tradeoff does not require prediction information about the whole cardinal profile, but only about the value of each voter for her most-preferred candidate. 

\section{Preliminaries}
We consider two fundamental social choice settings, {\em single-winner voting} and {\em one-sided matching}. In the voting setting, there is a set $N$ of $n$ {\em voters} and a set $A$ of $m$ {\em candidates}. Each voter $i \in N$ has a {\em valuation function} $v_i: A \rightarrow \mathbb{R}_{\geq 0}$ that maps each candidate $x \in A$ to a real non-negative value $v_i(x)$ representing the happiness level of $i$ if $x$ is chosen. As typical in the literature, we assume that the valuation functions satisfy the {\em unit-sum normalization} which requires that $\sum_{x \in A} v_i(x) = 1$ for each voter $i$. We denote by $\bv = (v_i)_{i \in N}$ the {\em valuation profile} of all voters. The {\em social welfare} of a candidate $x \in A$ is defined as the total value of the voters for $x$, that is, 
\begin{align*}
    \SW(x|\bv) = \sum_{i \in N} v_i(x). 
\end{align*}

In the one-sided matching setting, the setup is very similar. There is a set $N$ of {\em agents} and a set $M$ of {\em items} with $|N| = |M| = n$. Each agent has a (unit-sum normalized) valuation function $v_i:M \rightarrow \mathbb{R}$ assigning non-negative values to the different items. An outcome is a one-to-one {\em matching} $\mu$ between the agents and the items; we denote by $\mu(i)$ the item that is assigned to agent $i$, and by $\mu(x)$ the agent that is assigned to item $x$. The social welfare of a matching $\mu$ is 
\begin{align*}
    \SW(\mu|\bv) = \sum_{i \in N} v_i(\mu(i)). 
\end{align*}

\subsection{Mechanisms and Distortion}
A {\em mechanism} $\mech$ takes as input some information $\info(\bv)$ about the valuation profile and outputs an outcome $\mech(\info(\bv))$ that is either a candidate (in the voting setting) or a matching (in the one-sided matching setting). Typically, the available information consists of the {\em ordinal} preferences of the voters which are induced by their valuation functions. In particular, each voter $i$ reports an {\em ordering} (or, {\em ranking}) $\succ^\bv_i$ of the candidates or the items such that $x \succ^\bv_i y$ implies $v_i(x) \geq v_i(y)$. Let $\succ_\bv = (\succ^\bv_i)_{i \in N}$ be the {\em ordinal profile} of the voters induced by $\bv$.

The {\em distortion} is a measure of the loss in social welfare due to making decision based on incomplete information about the valuation functions of the agents. In particular, the distortion of a mechanism that takes as input only the ordinal profile $\succ_\bv$ is defined as the worst-case (over all possible instances) of the maximum social welfare achieved by any outcome over the social welfare achieved by the outcome computed by the mechanism, that is, 
\begin{align*}
    \distortion(\mech) = \sup_{\bv} \frac{\max_{z} \SW(z|\bv) }{ \SW( \mech(\succ_\bv) | \bv) }.
\end{align*}

\subsection{Mechanisms with Predictions}
In the learning-augmented framework, we consider classes of mechanisms that make decisions not only based on the ordinal profile $\succ_\bv$ of the voters, but also on some given {\em prediction information} $p$. While $p$ might be rather general, we will focus on cases where it is an outcome with desirable properties (such as the one maximizing the social welfare), or some (partial) valuation profile $\hat{\bv}$ which also induces the same ordinal profile as the true unknown valuation profile $\bv$ (that is, $\succ_{\hat{\bv}} = \succ_\bv$). 
This predicted information   
might be close to the actual true hidden information, or not. 
We will write $\bv \rhd p$ to denote that $p$ is \emph{accurate} for the true valuation profile. In particular, $\bv \rhd p$ implies that:
\begin{itemize}
    \item If $p$ is the optimal outcome $z^*$, then $z^* \in \arg\max_z \SW(z|\bv)$.
    \item If $p$ is a (partial) valuation profile $\hat{\bv}$, then $\hat{v}_i(x) = v_i(x)$ for each pair of agent $i$ and candidate or item $x$ explicitly defined in $\hat{\bv}$. 
\end{itemize}
We will denote by $\mech(\succ_\bv,p)$ the outcome computed by a mechanism $\mech$ that takes as input the ordinal profile $\succ_\bv$ and some predicted information $p$.

Since the decisions made by our mechanisms rely on some predicted information, we are interested in quantifying the distortion depending on whether the prediction is accurate or not. The {\em consistency} of a mechanism $\mech$ is the worst-case ratio (over all valuation profiles for which the prediction is accurate) between the maximum possible social welfare of any outcome and the social welfare of the outcome chosen by $M$:
\begin{align*}
    \consistency(\mech) = \sup_{\bv: \bv \rhd p} \frac{\max_z \SW(z|\bv) }{ \SW( \mech(\succ_\bv, p) | \bv) }.
\end{align*}
The {\em robustness} of a mechanism $M$ is the same ratio but with the worst-case taken over all valuation profiles and predictions:
\begin{align*}
    \robustness(\mech) = \sup_{\bv, p} \frac{\max_z \SW(z|\bv) }{ \SW( \mech(\succ_\bv, p) | \bv) }.
\end{align*}

\section{Voting with Predictions}\label{sec:voting}

When only ordinal information about the preferences of the voters over the candidates is available, the best possible distortion of voting mechanisms is $\Theta(m^2)$ and the upper bound is achieved by {\sc Plurality}, which simply outputs the candidate that is ranked first by most voters \citep{caragiannis2011voting,caragiannis2017subset}. We investigate whether improvements are possible when we are given different types of predictions on top of the ordinal preferences of the voters, and show tight tradeoffs on the consistency and robustness of voting mechanisms. 

\subsection{Predicting the Optimal Candidate}
We start with the case where the prediction is the optimal candidate (i.e., the candidate that maximizes the social welfare), and show that such a prediction is very weak and cannot lead to any meaningful improvement in the distortion. 

\begin{theorem} \label{thm:voting:optimal-alternative-prediction}
Given the optimal candidate as prediction, there is no mechanism that simultaneously achieves consistency $o(m^2)$ and bounded robustness. 
\end{theorem}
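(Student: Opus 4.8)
The plan is to prove the contrapositive: if a mechanism $\mech$ has finite robustness $R$, then its consistency is forced to be $\Omega(m^2)$. The conceptual point is that the prediction $p=c$ (``$c$ is the optimal candidate'') is compatible with a whole family of valuation profiles that all induce one fixed ordinal profile $\succ$. Hence the output $w=\mech(\succ,c)$ is a \emph{single} candidate that must cope with two antagonistic scenarios simultaneously: one in which $c$ is a disaster (so bounded robustness forces $\mech$ to hedge away from $c$), and one in which $c$ is genuinely optimal by a wide margin (so any hedge is punished in consistency). I would set this up so that whatever $w$ the mechanism commits to, one of the two scenarios yields a contradiction.

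Concretely, I would fix $m$ candidates $\{c,a_1,\dots,a_{m-1}\}$ and, for a large $n$ with $m-1 \mid n-1$, the ordinal profile $\succ$ in which one distinguished voter reports $c \succ a_1 \succ \cdots \succ a_{m-1}$, while the remaining $n-1$ voters are partitioned into equal groups $G_1,\dots,G_{m-1}$ of size $s=(n-1)/(m-1)$, each voter of $G_j$ reporting $a_j \succ c \succ (\text{rest})$. The single fixed prediction is $p=c$. The design is deliberate: $c$ is ranked first by exactly one voter, which is what will let us crush $\SW(c\mid\cdot)$, yet $c$ sits in second place for everyone else, which is what will let it accumulate welfare and become the true maximizer.

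I would then exhibit two families of profiles, all consistent with $\succ$. For robustness, let $\bv^0$ give the distinguished voter near-uniform values (so her value for $c$ is $\approx 1/m$) and let each group $G_j$ place all its mass on $a_j$ (so every such voter's value for $c$ is $0$); then $\SW(c\mid\bv^0)\approx 1/m$ while $\SW(a_1\mid\bv^0)\approx s=\Theta(n/m)$, so outputting $c$ here has distortion $\Theta(n)$. For consistency, and for each target $a_k$, let $\bv_k$ give group $G_k$ near-uniform values (each contributing only $\approx 1/m$ to $a_k$), let every other group $G_j$ split its mass evenly, $v(a_j)=v(c)=\tfrac12$, and let the distinguished voter put all mass on $c$. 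A short computation gives $\SW(c\mid\bv_k)=\Theta(n)$ with $c$ the maximizer (so $\bv_k \rhd p$ and every $a_j$, $j\neq k$, has welfare $\le \Theta(n/m)$), whereas $\SW(a_k\mid\bv_k)=\Theta(n/m^2)$ because $a_k$ is supported only by its own, now-uniform, group. Thus outputting $a_k$ on this accurate instance incurs distortion $\Theta(m^2)$.

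The argument closes by a case analysis on $w=\mech(\succ,c)$. If $w=c$, the instance $(\bv^0,p=c)$ (an inaccurate prediction, legitimate for robustness) witnesses $\robustness(\mech)=\Theta(n)$; choosing $n>R$ contradicts finiteness of $R$. Hence for all large $n$ we have $w=a_k$ for some $k$, and then the accurate instance $\bv_k$ witnesses $\consistency(\mech)=\Theta(m^2)$. I expect the main obstacle to be calibrating the construction so that $c$ is at once crushable (for robustness) and a wide-margin optimum (for consistency): the resolution is the asymmetry between the target group $G_k$, forced to be uniform and therefore donating only $\approx 1/m$ per voter to $a_k$, and the other groups, donating $\tfrac12$ per voter to $c$; this $1/m$-versus-$1/2$ gap is exactly what upgrades the separation from $\Omega(m)$ to the required $\Omega(m^2)$. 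A secondary point to check is that every profile used induces precisely the fixed $\succ$ (uniform and split-evenly values are consistent with any reported, tie-broken ranking) and that $c$ is a genuine maximizer in each $\bv_k$, so that the consistency bound legitimately applies.
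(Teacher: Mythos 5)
Your proposal is correct and follows essentially the same argument as the paper: the same case analysis on the mechanism's output, the same ``all mass on the top choice'' profile to make the predicted candidate's welfare collapse (forcing bounded-robustness mechanisms away from it), and the same ``make the supporting group uniform so each voter donates only $1/m$'' trick to get the $\Omega(m^2)$ consistency gap. The only cosmetic differences are your extra distinguished voter ranking $c$ first (which forces you to let $n\to\infty$ in the robustness case, where the paper gets welfare exactly $0$) -- harmless either way.
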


\begin{proof}
Consider an instance with $n$ voters and $m$ candidates $\{a_1, \ldots, a_{m-1},o\}$. 
For each $\ell \in [m-1]$, there is a set $S_\ell$ of $n/(m-1)$ voters that all rank candidate $a_\ell$ first, $o$ second, and then have an arbitrary ordering of the remaining candidates. In addition, the predicted optimal candidate is $o$.

Suppose that a mechanism outputs $o$ as the winner. Clearly, such a mechanism has consistency $1$ in case $o$ is indeed the optimal candidate. However, the robustness of the mechanism is unbounded. Indeed, consider the valuation profile according to which all voters have value $1$ for their top-ranked candidates and $0$ for all others. Then, the social welfare of $o$ is $0$, while the social welfare of any other candidate is $n/(m-1)$, leading to infinite robustness. So, to achieve bounded robustness, the mechanism must choose candidate $a_\ell$ for some $\ell \in [m-1]$. Given this, consider the valuation profile according to which all the voters in set $S_\ell$ have value $1/m$ for all candidates, and all other voters have value $1/2$ for their two most-preferred candidates and $0$ for all others. Hence, the social welfare of $a_\ell$ is $\frac{n}{m(m-1)}$, whereas the social welfare of $o$ is
$$\frac{n}{m(m-1)} + \left(n-\frac{n}{m-1}\right)\cdot \frac12 \geq \frac{n}{2} \cdot \frac{m-1}{m},$$ 
thus leading to a consistency of $\Omega(m^2)$. 
\end{proof}

\subsection{Predicting the Valuation Profile}
Since getting the optimal candidate as a prediction does not lead to an improved distortion compared to just using the available ordinal preferences of the voters, we now switch to predictions about the valuation profile of the voters. We again show an impossibility, which however leaves room for achieving consistency $\omega(m^2)$ while at the same time not losing too much on robustness.

\begin{theorem}\label{thm:voting:full-valuation:lower}
Given the full valuation profile as prediction, there is no mechanism that simultaneously achieves consistency strictly smaller than $m-1$ and bounded robustness.
\end{theorem}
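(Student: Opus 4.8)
The plan is to exhibit one carefully balanced instance in which any mechanism aiming for consistency strictly below $m-1$ is forced to commit to a candidate whose robustness must then be unbounded. The structural observation I would make first is that a mechanism is a function of its input $(\succ_\bv,\bvh)$ alone: once I fix an ordinal profile $\succ$ and a predicted profile $\bvh$ with $\succ_{\bvh}=\succ$, the mechanism returns a single fixed candidate $c=\mech(\succ,\bvh)$, and this same output is used both in the consistency scenario (true profile equal to $\bvh$) and in every robustness scenario (true profile any $\bv'$ with $\succ_{\bv'}=\succ$, for which $\bvh$ is merely an inaccurate prediction). Hence a single instance simultaneously constrains both quantities.

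Next I would write down the instance. Take candidates $\{a_1,\dots,a_{m-1},o\}$ and partition the $n$ voters into $m-1$ equal groups $S_1,\dots,S_{m-1}$ (assuming $(m-1)\mid n$), where every voter in $S_\ell$ ranks $a_\ell$ first and $o$ second. For the predicted profile $\bvh$ I set $\hat v_i(a_\ell)=\hat v_i(o)=1/2$ and $\hat v_i(x)=0$ otherwise for each $i\in S_\ell$; this is unit-sum and ordinally consistent, since the equal values permit $a_\ell \succ o$. A direct computation gives $\SW(o\mid\bvh)=n/2$ and $\SW(a_\ell\mid\bvh)=n/(2(m-1))$ for every $\ell$, so $o$ is the unique optimal candidate and the ratio obtained by returning any $a_\ell$ is exactly $m-1$.

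From here the argument closes quickly. Since each $a_\ell$ yields ratio exactly $m-1$ and $o$ yields ratio $1$, the only candidate whose ratio on this instance is strictly below $m-1$ is $o$; thus any mechanism with $\consistency(\mech)<m-1$ must set $c=o$. But $o$ is ranked second, never first, by every voter, so I can pick a true profile $\bv'$ with the same ordinal profile by placing each voter's entire unit of value on her top candidate, i.e.\ $v'_i(a_\ell)=1,\ v'_i(o)=0$ for $i\in S_\ell$. This is ordinally consistent and makes $\SW(o\mid\bv')=0$ while $\SW(a_\ell\mid\bv')=n/(m-1)>0$, forcing $\robustness(\mech)=\infty$. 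The two requirements are therefore incompatible, which is the claim.

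The delicate point, and the step I would check most carefully, is the \emph{tightness} of the construction: the split $1/2,1/2$ together with the equal group sizes is exactly what makes every $a_\ell$ land at ratio precisely $m-1$ rather than strictly below it. The reason this is unavoidable is a unit-sum accounting bound: because $o$ is non-top its per-voter value is at most $1/2$, so $\SW(o)\le n/2$, while the residual welfare $n-\SW(o)$ is shared among the $m-1$ candidates $a_\ell$, forcing the best of them to have welfare at least $\SW(o)/(m-1)$. I would verify that the chosen instance meets this bound with equality, so that no mechanism can dip strictly under the $m-1$ threshold without abandoning the only robust options (the top-ranked candidates) in favour of the fragile optimum $o$.
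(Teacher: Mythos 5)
Your proposal is correct and follows essentially the same route as the paper's own proof: the identical instance (equal groups $S_\ell$ ranking $a_\ell$ first and $o$ second, predicted values $1/2,1/2$ on the top two candidates), the same computation forcing any mechanism with consistency strictly below $m-1$ to select $o$, and the same adversarial true profile concentrating all value on top choices to make the robustness unbounded. The closing discussion of why the $1/2,1/2$ split is tight is a nice addition but not needed for the argument.
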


\begin{proof}
Consider an instance similar to that in the proof of \cref{thm:voting:optimal-alternative-prediction} with $n$ voters and $m$ candidates $\{a_1, \ldots, a_{m-1},o\}$. For each $\ell \in [m-1]$, there is a set $S_\ell$ of $n/(m-1)$ voters that all rank candidate $a_\ell$ first, $o$ second, and then have an arbitrary ordering of the remaining candidates. In addition, the predicted valuation profile $\hat{\bv}$ is such that all voters have value $1/2$ for each of their two most-preferred candidates, and $0$ for all others. 
    
Suppose first that $\hat{\bv}$ is accurate. Then, candidate $o$ is the optimal one with social welfare $n/2$, while the social welfare of any other candidate $a_\ell$ is $\frac{n}{2(m-1)}$. If the mechanism chooses $a_\ell$ as the winner, its consistency is at least $m-1$. Hence, it must be the case that the mechanism chooses $o$ as the winner so that it achieves consistency strictly better than $m-1$. However, $\hat{\bv}$ might be inaccurate and the true valuation profile $\bv$ might be such all voters have value $1$ for their top-ranked candidates and $0$ for all others. Hence, the social welfare of the winner $o$ is $0$, whereas the social welfare of any other candidate is $n/(m-1)$, leading to unbounded robustness. 
\end{proof}

We next present a mechanism that achieves consistency linear in the number $m$ of candidates and at the same time bounded robustness which is not that much worse than the worst-case distortion guarantee of the best ordinal mechanism (that does not use any predictions at all). For any candidate $x$, let $T_x$ be the set of voters that rank $x$ first; note that these sets only depend on the ordinal preferences of the voters. Let $\SW_1(x|\bv) = \sum_{i \in T_x} v_i(x)$ be the total value of the voters in $T_x$ for $x$; in other words, $\SW_1(x|\bv)$ is the {\em first-place social welfare} of $x$ according to the valuation profile $\bv$. In addition, let $\plu(x) = |T_x|$ be the {\em plurality score} of $x$. Our mechanism outputs the candidate with maximum predicted first-place social welfare. See Mechanism~\ref{mech:max-predicted-SW1}.

\SetCommentSty{mycommfont}
\begin{algorithm}[ht]
\SetNoFillComment
\caption{}
\label{mech:max-predicted-SW1}
\SetKwInOut{Input}{Input}
\Input{Ordinal profile $\succ$, predicted valuation profile $\hat{\bv}$}
\For{each $x\in A$}
{
    $T_x \gets $ voters that rank $x$ first\;
    $\SW_1(x|\hat{\bv}) \gets \sum_{i \in T_x} \hat{v}_i(x)$\;
}
\Return $\arg\max_{x \in A} \SW_1(x|\hat{\bv})$\; 
\end{algorithm}

Before analyzing the distortion guarantees of our mechanism, we prove a technical lemma with properties that are implied by the definition of the first-place social welfare, the definition of plurality score, and the unit-sum assumption. These will be useful in many parts of the following proofs in this section. 

\begin{lemma}\label{lem:voting:unit-sum:implications}
For any valuation profile $\bv$ and candidate $x \in A$, the following are true:
\begin{enumerate}
    \item[(i)]  $\SW(x | \bv) \leq m \cdot \max_{y \in A} \SW_1(y|\bv)$;
    \item[(ii)] $\SW_1(x|\bv) \leq \plu(x)$;
    \item[(iii)] $\frac{1}{m} \cdot \plu(x) \leq \SW_1(x|\bv) \leq \SW(x|\bv)$. 
\end{enumerate}
\end{lemma}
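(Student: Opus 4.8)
The plan is to derive all three items directly from the definitions, relying only on the unit-sum normalization and on the fact that each voter values her top-ranked candidate at least as highly as any other candidate. I would dispose of the two inequalities that follow immediately from nonnegativity first, and then handle the one genuinely structural bound.

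For item (ii), I would observe that any voter $i \in T_x$ has nonnegative values summing to $1$, so in particular $v_i(x) \leq 1$; summing over $T_x$ gives $\SW_1(x|\bv) = \sum_{i \in T_x} v_i(x) \leq |T_x| = \plu(x)$. For the right-hand inequality in item (iii), since $T_x \subseteq N$ and all values are nonnegative, $\SW_1(x|\bv) = \sum_{i \in T_x} v_i(x) \leq \sum_{i \in N} v_i(x) = \SW(x|\bv)$. For the left-hand inequality in item (iii), the point is that when $i$ ranks $x$ first, $v_i(x)$ is $i$'s largest value, which by the unit-sum constraint is at least the average $1/m$; summing over $T_x$ yields $\SW_1(x|\bv) \geq |T_x|/m = \plu(x)/m$.

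The main step is item (i), which is the only one needing a nontrivial idea. Here I would partition the electorate by top choice, writing $N = \bigcup_{y \in A} T_y$ so that $\SW(x|\bv) = \sum_{y \in A} \sum_{i \in T_y} v_i(x)$. The crucial observation is that every voter $i \in T_y$ ranks $y$ first, hence $v_i(x) \leq v_i(y)$, and so the inner sum is at most $\sum_{i \in T_y} v_i(y) = \SW_1(y|\bv)$. Bounding each of the $m$ resulting terms by $\max_{z \in A} \SW_1(z|\bv)$ then gives $\SW(x|\bv) \leq m \cdot \max_{z \in A} \SW_1(z|\bv)$, as required.

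I do not expect any real obstacle, since every bound is a one-line consequence of the definitions; the only subtlety lies in item (i), where the dominance $v_i(x) \leq v_i(y)$ must be applied with respect to each voter's \emph{own} top candidate $y$ rather than with respect to $x$. Partitioning the voters by their first choice is exactly what keeps this bookkeeping clean.
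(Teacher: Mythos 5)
Your proof is correct and follows essentially the same route as the paper's: items (ii) and (iii) via the pointwise bounds $1/m \le v_i(x) \le 1$ for $i \in T_x$ together with nonnegativity, and item (i) via partitioning the voters by their top choice and bounding $v_i(x) \le v_i(y)$ for $i \in T_y$, which yields $\SW(x|\bv) \le \sum_{y \in A} \SW_1(y|\bv) \le m \cdot \max_{y} \SW_1(y|\bv)$. No gaps.
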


\begin{proof}
For (i), since $\max_{z \in A} v_i(z) = v_i(y)$ if $i \in T_y$ and $\SW_1(y | \bv) = \sum_{i \in T_y} v_i(y)$, we have
\begin{align*}
\SW(x|\bv) 
    = \sum_{i \in N} v_i(x) 
    &\leq \sum_{i \in N} \max_{z \in A} v_i(z) \\
    &= \sum_{y \in A} \sum_{i \in T_y} v_i(y)
    = \sum_{y \in A} \SW_1(y|\bv) 
    \leq m \cdot \max_{y \in A} \SW_1(y|\bv).
\end{align*}
For (ii) and (iii), by the unit-sum assumption, for any $i \in T_x$, we have that $\frac{1}{m} \leq v_i(x) \leq 1$. Hence, 
\begin{align*}
    \SW_1(x|\bv) = \sum_{i \in T_x} v_i(x) \leq \plu(x)
\end{align*}
and 
\begin{align*}
    \frac{1}{m} \cdot \plu(x) \leq \sum_{i \in T_x} v_i(x) = \SW_1(x|\bv) \leq \SW(x|\bv).
\end{align*}
This completes the proof.
\end{proof}

We are now ready to analyze the consistency and robustness of Mechanism~\ref{mech:max-predicted-SW1}.

\begin{theorem} \label{thm:voting:upper:mech:max-predicted-SW1}
Mechanism~\ref{mech:max-predicted-SW1} achieves consistency at most $m$ and robustness at most $\min\{nm,m^3\}$.
\end{theorem}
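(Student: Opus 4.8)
The plan is to handle consistency and robustness separately, in both cases reducing everything to the three inequalities of \cref{lem:voting:unit-sum:implications} together with the single structural fact that the mechanism returns the candidate $w := \mech(\succ_\bv,\hat{\bv})$ with largest \emph{predicted} first-place social welfare $\SW_1(\cdot|\hat{\bv})$. Throughout I would use that $\hat{\bv}$ induces the same ordinal profile as $\bv$, so the sets $T_x$ (and hence $\plu(\cdot)$) are common to the two profiles.

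For consistency, I would first observe that the mechanism only reads the predicted values $\hat{v}_i(x)$ for voters $i \in T_x$, i.e.\ each voter's value for her own top candidate. When the prediction is accurate these agree with the true values, so $\SW_1(x|\hat{\bv}) = \SW_1(x|\bv)$ for every $x$, and hence $w$ is exactly the candidate maximizing the \emph{true} first-place social welfare. Then I would chain parts (i) and (iii) of the lemma: $\max_z \SW(z|\bv) \le m\cdot \max_y \SW_1(y|\bv) = m\cdot \SW_1(w|\bv) \le m\cdot \SW(w|\bv)$, which gives consistency at most $m$.

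For robustness the winner $w$ is now chosen according to a possibly adversarial $\hat{\bv}$, so the difficulty is that $w$ need not be good for the true profile. The key leverage is that $\hat{\bv}$ is a unit-sum profile consistent with the reported rankings, which lets me apply the lemma to $\hat{\bv}$ as well. Writing $y^* := \arg\max_y \SW_1(y|\bv)$, I would combine the argmax property of $w$ with parts (ii) and (iii) applied to $\hat{\bv}$, obtaining $\plu(w) \ge \SW_1(w|\hat{\bv}) \ge \SW_1(y^*|\hat{\bv}) \ge \tfrac{1}{m}\plu(y^*)$; that is, the winner's plurality score is within a factor $m$ of that of the true first-place leader. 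Feeding this back through the true profile via (i), (ii) and (iii), $\max_z \SW(z|\bv) \le m\,\SW_1(y^*|\bv) \le m\,\plu(y^*) \le m^2\,\plu(w) \le m^3\,\SW(w|\bv)$, where the final step uses $\SW(w|\bv) \ge \SW_1(w|\bv) \ge \tfrac{1}{m}\plu(w)$. This yields robustness at most $m^3$.

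Finally, for the $nm$ bound I would note that the chain above forces $\plu(w) \ge \tfrac{1}{m}\plu(y^*) \ge \tfrac{1}{m}$, and since $\plu(w)$ is a nonnegative integer this gives $\plu(w) \ge 1$ and therefore $\SW(w|\bv) \ge \tfrac{1}{m}$; combined with the trivial bound $\max_z \SW(z|\bv) \le n$, this gives a ratio of at most $nm$, and the robustness is the minimum of the two. I expect the main obstacle to be precisely the middle step of the robustness argument — controlling a winner selected by an arbitrary prediction in terms of the true optimum — and the crucial (and easy to overlook) ingredient there is that the prediction must itself be a unit-sum profile inducing the reported rankings; without this, the two-sided sandwich $\tfrac{1}{m}\plu(x) \le \SW_1(x|\hat{\bv}) \le \plu(x)$ fails and the predicted argmax could be arbitrarily far from $y^*$.
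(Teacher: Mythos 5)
Your proof is correct and follows essentially the same approach as the paper: both arguments reduce everything to Lemma~\ref{lem:voting:unit-sum:implications}, with the crucial observation (which you rightly flag) that the sandwich $\frac{1}{m}\plu(x) \le \SW_1(x|\hat{\bv}) \le \plu(x)$ may be applied to the predicted profile because $\hat{\bv}$ is itself unit-sum and induces the same rankings. The only difference is cosmetic: for the $m^3$ bound the paper starts from $\SW(o|\bv) \le n = \sum_x \plu(x)$ and sums the sandwich over all candidates, whereas you track the single true first-place leader $y^*$ and show $\plu(y^*) \le m\,\plu(w)$ via the prediction; both chains yield the same $m^3$ factor.
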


\begin{proof}
We first prove the consistency bound of $O(m)$ for when the prediction is accurate, that is, $\hat{\bv} = \bv$. 
By definition, the mechanism chooses a candidate $w$ with maximum first-place social welfare, and thus 
$\SW(w |\bv) \geq \SW_1(w |\bv) = \max_{y \in A} \SW_1(y | \bv)$. We can now bound the social welfare of the optimal candidate $o$ by applying \cref{lem:voting:unit-sum:implications}~(i) for $x=o$: 
\begin{align*}
    \SW(o|\bv) \leq m \cdot \max_{y\in A}\SW_1(y | \bv) \leq m \cdot \SW(w|\bv). 
\end{align*}

We next consider the case where the predicted valuation profile $\hat{\bv}$ may be arbitrarily inaccurate with respect to the true valuation profile $\bv$. We show two bounds on the robustness of the mechanism: $nm$ and $Om^3$. For the first bound, observe that, by the unit-sum assumption, each voter has value at most $1$ for any candidate, and thus the social welfare of an optimal candidate $o$ is $\SW(o | \bv) \leq n$. Since the mechanism maximizes the predicted first-place social welfare, it must be the case that $\plu(w) \geq 1$, and thus, by \cref{lem:voting:unit-sum:implications}~(iii), we have that $\SW(w|\bv) \geq 1/m$. Hence, the robustness is at most $nm$. 

For the second bound of $m^3$, observe that the number $n$ of agents is equal to the total plurality score of all candidates. Hence, we can bound the social welfare of an optimal candidate $o$ as follows:
\begin{align*}
    \SW(o | \bv) \leq n = \sum_{x \in A} \plu(x). 
\end{align*}
By the definition of the mechanism, the chosen candidate $w$ is such that $\SW_1(w | \hat{\bv} ) \geq \SW_1(x | \hat{\bv})$ for any candidate $x$. 
Hence, by applying \cref{lem:voting:unit-sum:implications}~(iii) for all candidates, the aforementioned property of $w$, and then \cref{lem:voting:unit-sum:implications}~(ii) for $x=w$, we obtain
\begin{align*}
    \frac{1}{m} \cdot \sum_{x \in A} \plu(x) \leq \sum_{x \in A} \SW_1(x | \hat{\bv}) \leq m \cdot \SW_1(w | \hat{\bv} ) \leq m \cdot \plu(w)
\end{align*}
This implies
\begin{align*}
    \SW(o | \bv) \leq \sum_{x \in A} \plu(x) \leq m^2 \cdot \plu(w). 
\end{align*}
The robustness bound of $m^3$ follows by the fact that $\SW(w | \bv) \geq \frac{1}{m}\cdot \plu(w)$ (due to \cref{lem:voting:unit-sum:implications}~(iii) for $x=w$), and the proof is complete.
\end{proof}

Observe that Mechanism~\ref{mech:max-predicted-SW1} achieves the best possible consistency-robustness tradeoff for instances with a small number of voters compared to candidates. In particular, if $n \leq m$, then the mechanism has consistency $O(m)$ and robustness $O(m^2)$, and this is best possible due to \cref{thm:voting:full-valuation:lower}. For instances with a small number of candidates, however, the mechanism only provides a robustness guarantee of $O(m^3)$. The main issue is that the candidate chosen by Mechanism~\ref{mech:max-predicted-SW1} might have a small plurality score, despite maximizing the predicted social welfare. A way to improve the robustness is by aiming to achieve a relatively larger plurality score, so that we achieve a better guarantee in case the prediction is inaccurate. However, by doing so, we potentially sacrifice some predicted social welfare, which can affect the consistency guarantee when the prediction is accurate. To strike a balance between the two, we define a mechanism that uses a parameter $\lambda \leq m$ and chooses the candidate with the largest plurality score among those that achieve a $\lambda$-approximation of the maximum possible predicted social welfare. 
See Mechanism~\ref{mech:parameterized-Plurality-SW} for the precise definition.
Note that by setting $\lambda = 1$, Mechanism~\ref{mech:parameterized-Plurality-SW} reduces to Mechanism~\ref{mech:max-predicted-SW1}.

\begin{algorithm}[ht]
\SetNoFillComment
\caption{}
\label{mech:parameterized-Plurality-SW}
\SetKwInOut{Input}{Input}
\Input{Ordinal profile $\succ$, predicted valuation profile $\hat{\bv}$, parameter $\lambda \in [1,m]$}
\For{each $x\in A$}
{
    $T_x \gets $ voters that rank $x$ first\;
    $\SW_1(x|\hat{\bv}) \gets \sum_{i \in T_x} \hat{v}_i(x)$\;
}
$a^* \gets \arg\max_{x \in A} \SW_1(x|\hat{\bv})$\; 
$S_\lambda \gets \bigg\{x \in A: \SW_1(x|\hat{\bv}) \geq \frac{1}{\lambda} \cdot \SW_1(a^*|\hat{\bv}) \bigg\}$\; 
\Return $w \in \arg\max_{x\in S_\lambda} \plu(x)$\;
\end{algorithm}

\begin{theorem} \label{thm:voting:upper:parameterized-Plurality-SW}
For any $\lambda \in [1,m]$, Mechanism~\ref{mech:parameterized-Plurality-SW} achieves consistency at most $\lambda m$ and robustness at most $m^3/\lambda$. 
\end{theorem}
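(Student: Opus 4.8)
The plan is to handle the two bounds separately, in both cases reducing everything to the three inequalities of \cref{lem:voting:unit-sum:implications} together with the two defining properties of the returned candidate $w$: that $w \in S_\lambda$, and that $w$ maximizes the plurality score among the candidates in $S_\lambda$.

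For the consistency bound I would assume $\hat{\bv} = \bv$. Since $w \in S_\lambda$, its true first-place welfare satisfies $\SW_1(w|\bv) \geq \frac{1}{\lambda}\SW_1(a^*|\bv) = \frac{1}{\lambda}\max_{y \in A}\SW_1(y|\bv)$, because $a^*$ is exactly the maximizer of the (now accurate) first-place welfare. Combining this with part (iii) of the lemma, which gives $\SW(w|\bv) \geq \SW_1(w|\bv)$, and then applying part (i) to the optimal candidate $o$ yields $\SW(o|\bv) \leq m\cdot\max_{y}\SW_1(y|\bv) \leq \lambda m\cdot\SW(w|\bv)$, which is the claimed bound.

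For the robustness bound the prediction $\hat{\bv}$ is arbitrary, so I can no longer argue through true first-place welfare; the key idea is instead to control the plurality score of every candidate by $\plu(w)$. I would start from $\SW(o|\bv) \leq n = \sum_{x \in A}\plu(x)$, so that it suffices to show $\plu(x) \leq \frac{m}{\lambda}\plu(w)$ for each $x$. When $x \in S_\lambda$ this is immediate, since $w$ maximizes the plurality score within $S_\lambda$ and $m/\lambda \geq 1$ (using $\lambda \leq m$). When $x \notin S_\lambda$, I would chain: part (iii) gives $\plu(x) \leq m\cdot\SW_1(x|\hat{\bv})$; the definition of $S_\lambda$ gives $\SW_1(x|\hat{\bv}) < \frac{1}{\lambda}\SW_1(a^*|\hat{\bv})$; part (ii) applied to $a^*$ gives $\SW_1(a^*|\hat{\bv}) \leq \plu(a^*)$; and finally $a^* \in S_\lambda$ forces $\plu(a^*) \leq \plu(w)$, so altogether $\plu(x) < \frac{m}{\lambda}\plu(w)$. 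Summing over the $m$ candidates gives $\SW(o|\bv) \leq \frac{m^2}{\lambda}\plu(w)$, and one last application of part (iii) ($\SW(w|\bv) \geq \frac{1}{m}\plu(w)$) delivers robustness at most $m^3/\lambda$.

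I expect the only delicate point to be the case $x \notin S_\lambda$ in the robustness argument: it is the single place where the parameter $\lambda$ actually enters, and it hinges on pushing the unit-sum lower bound on plurality through the predicted welfare and back to $\plu(w)$ via the membership $a^* \in S_\lambda$. Everything else is a direct substitution into \cref{lem:voting:unit-sum:implications}, so I would mainly take care to invoke $\lambda \leq m$ exactly where needed (to guarantee $m/\lambda \geq 1$ in the $S_\lambda$ case) and to keep the strict versus non-strict inequalities consistent.
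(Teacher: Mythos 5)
Your proposal is correct and follows essentially the same route as the paper's proof: the identical consistency chain via \cref{lem:voting:unit-sum:implications}~(i) and the membership $w \in S_\lambda$, and the same robustness decomposition of $n = \sum_x \plu(x)$ into candidates inside and outside $S_\lambda$, with the same chain $\plu(x) \leq m\cdot\SW_1(x|\hat{\bv}) < \frac{m}{\lambda}\SW_1(a^*|\hat{\bv}) \leq \frac{m}{\lambda}\plu(a^*) \leq \frac{m}{\lambda}\plu(w)$ for the outside case. The only cosmetic difference is that you convert $\plu(w)$ to $\SW(w|\bv)$ once at the end rather than term by term, which changes nothing.
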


\begin{proof}
Let $w$ be the candidate chosen by Mechanism~\ref{mech:parameterized-Plurality-SW}, and $o$ an optimal candidate. We first consider the case where $\hat{\bv} = \bv$. Since $a^*$ is the candidate with maximum first-place social welfare, by \cref{lem:voting:unit-sum:implications}~(i), we have that 
$$\SW_1(a^{*}|\bv) = \max_{y \in A} \SW_1(y|\bv) \geq \frac{1}{m} \cdot \SW(o |\bv).$$  
Using this in combination with the fact that $w \in S_{\lambda}$, we have that 
\begin{align*}
    \SW(w | \bv) \geq \SW_1(w|\bv) \geq \frac{1}{\lambda} \cdot \SW_1(a^*|\bv) \geq \frac{1}{\lambda m} \cdot \SW(o |\bv),
\end{align*}
and thus the consistency is at most $\lambda m$. 

For the robustness, since each voter has value at most $1$ for any candidate, we have
\begin{align*}
    \SW(o | \bv) \leq n = \sum_{y \in A} \plu(y) = \sum_{y \in S_\lambda} \plu(y) + \sum_{y \not\in S_\lambda} \plu(y).
\end{align*}
We now bound the plurality score of any candidate $y$ in terms of the social welfare of $w$, depending on whether $x$ belongs to $S_\lambda$, or not. 
\begin{itemize}
\item If $y \in S_\lambda$, then, since $w$ is the candidate in $S_\lambda$ with maximum plurality score, by applying \cref{lem:voting:unit-sum:implications}~(iii) for $x=w$, we have 
$$\plu(y) \leq \plu(w) \leq m \cdot \SW(w|\bv).$$ 

\item If $x \not\in S_\lambda$, then by the definition of $S_\lambda$, $\SW_1(x|\hat{\bv}) < \frac{1}{\lambda} \cdot \SW_1(a^*|\hat{\bv})$. 
By applying \cref{lem:voting:unit-sum:implications}~(iii) for $x=y$, we have that $\plu(y) \leq m\cdot \SW_1(y | \hat{\bv})$ and $\plu(y) \leq m \cdot \SW(y|\bv)$. In addition, since both $w, a^* \in S_\lambda$ and $w$ is the candidate in $S_\lambda$ with maximum plurality score, by applying \cref{lem:voting:unit-sum:implications}~(ii) for $x=a^*$, we have that $\SW_1(a^*|\hat{\bv}) \leq \plu(a^*) \leq \plu(w)$. 
Combining all of these, we get 
    \begin{align*}
        \plu(y) \leq m \cdot \SW_1(y| \hat{\bv}) < \frac{m}{\lambda} \cdot  \SW_1(a^*|\hat{\bv}) 
        \leq \frac{m}{\lambda} \cdot \plu(a^*) \leq \frac{m}{\lambda} \cdot \plu(w) \leq  \frac{m^2}{\lambda} \cdot \SW(w|\bv).
    \end{align*}
\end{itemize}
Hence, putting everything together, and using the fact that $m \leq m^2/\lambda$ (since $\lambda \leq m$), we have
\begin{align*}
    \SW(o |\bv) 
    &\leq m \cdot \sum_{y \in S_\lambda} \SW(w|\bv)  + \frac{m^2}{\lambda} \cdot \sum_{y \not\in S_\lambda}  \SW(w|\bv) \\
    &\leq \frac{m^2}{\lambda} \cdot \sum_{y \in A} \SW(w|\bv) 
    = \frac{m^3}{\lambda} \cdot \SW(w | \bv),
\end{align*}
and thus the robustness is at most $m^3/\lambda$, as claimed. 
\end{proof}



Next, we show that the consistency-robustness tradeoff achieved by Mechanism~\ref{mech:parameterized-Plurality-SW} is asymptotically best possible over all mechanisms that take as input a prediction of the valuation profile. We remark that our mechanism actually does not require a prediction of the whole valuation profile, but only the values of the voters for the candidates they rank first.

\begin{theorem} \label{thm:voting:full-valuation:lower:tradeoff}
For any $\lambda \in [1,m]$, any mechanism with consistency $o(\lambda m)$ has robustness $\Omega(m^3/\lambda)$.
\end{theorem}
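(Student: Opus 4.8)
The plan is to argue by contraposition on the consistency. For each $\lambda$ I would construct a single instance $(\succ,\hat{\bv})$ such that \emph{every} candidate is either far from optimal under $\hat{\bv}$ (so that outputting it costs consistency $\Omega(\lambda m)$) or crushable by some true profile sharing the same input (so that outputting it costs robustness $\Omega(m^3/\lambda)$). Granting this, a mechanism with consistency $o(\lambda m)$ cannot output a candidate of the first kind on this instance, since when $\hat{\bv}$ is accurate that output would force consistency $\Omega(\lambda m)$; hence its output $w$ is of the second kind, and evaluating the mechanism on the pair $(\bv,\hat{\bv})$ with $\bv$ the crushing profile yields robustness $\Omega(m^3/\lambda)$. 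Writing $V^\ast=\max_x \SW(x|\hat{\bv})$, the first kind is exactly the candidates with $\SW(x|\hat{\bv}) \le V^\ast/(\lambda m)$, so the task reduces to designing $(\succ,\hat{\bv})$ in which every candidate with predicted welfare above $V^\ast/(\lambda m)$ has small plurality, while some candidate still reaches predicted welfare $V^\ast=\Omega(n)$ and some reachable benchmark attains true welfare $\Omega(n)$.

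For the construction I would take a predicted-optimal ``decoy'' $o$ that collects its welfare mostly from voters ranking it \emph{second} with predicted value close to $1/2$, so that $\SW(o|\hat{\bv})=\Omega(n)$ while $\plu(o)$ stays small; the remaining voters, carrying almost all of the plurality, are placed on candidates whose predicted top-value is kept low. For the robustness side, the crushing profile for a chosen $w$ sets $w$'s first-place voters to the uniform valuation---driving $\SW(w|\bv)$ down to the minimum $\plu(w)/m$ guaranteed by \cref{lem:voting:unit-sum:implications}(iii)---while concentrating the value of the other voters on a designated benchmark candidate whose true welfare reaches some $B=\Omega(n)$. This gives robustness $\Omega\big(mB/\plu(w)\big)=\Omega(m^3/\lambda)$ precisely when $\plu(w)=O(\lambda n/m^2)$, which is the plurality threshold separating ``robustness-bad'' from ``robustness-safe'' candidates.

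The main obstacle---and the place where $\lambda$ genuinely enters---is the coupling between predicted welfare and plurality forced by unit-sum: \cref{lem:voting:unit-sum:implications}(iii) makes every candidate's first-place voters value it at least $1/m$, so a \emph{single} decoy can separate welfares by a factor of at most $m$, which by itself only reproduces the $\Theta(m)$ bound of \cref{thm:voting:full-valuation:lower}. Moreover, since $\sum_x \plu(x)=n$ and candidates with $\plu(x)=O(\lambda n/m^2)$ can together hold only $O(\lambda n/m)$ voters, almost all voters must sit on higher-plurality candidates, whose welfare is then at least $\plu(x)/m$ and threatens to exceed the threshold $V^\ast/(\lambda m)$---making them consistency-safe and robustness-safe simultaneously. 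Ruling out such a doubly-safe candidate is the crux, and I expect to resolve it with a layered construction in the spirit of the ordinal $\Omega(m^2)$ lower bound of \citet{caragiannis2017subset}, in which the would-be safe candidates are themselves the top of a smaller hard instance, with the number and sizes of the tiers chosen as a function of $\lambda$ so that the accumulated consistency loss is $\Theta(\lambda m)$ and the accumulated robustness loss is $\Theta(m^3/\lambda)$. Checking that, after crushing the chosen candidate, a benchmark of welfare $\Omega(n)$ still survives within the fixed ordinal profile $\succ$ is the delicate calculation I would carry out last.
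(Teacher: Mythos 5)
Your high-level strategy is the right one and matches the paper's: fix a single instance $(\succ,\hat{\bv})$, show every candidate is either consistency-bad or crushable, and crush whichever candidate the mechanism is forced to output. You also correctly identify both the decoy (a predicted-optimal candidate with near-zero plurality fed by second-place values of $\approx 1/2$) and the crushing move (set the chosen candidate's first-place voters to the uniform valuation, invoking \cref{lem:voting:unit-sum:implications}(iii)). But the proof is not complete: the construction that resolves what you yourself call the crux --- the ``doubly-safe'' high-plurality candidates --- is exactly the content of the theorem, and you defer it to a hoped-for ``layered construction'' that you never carry out. As written, there is no verified instance, so there is no proof.

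For the record, the paper resolves your crux without any recursion, using a flat two-tier structure with $m=2k+1$ candidates $\{a,b_1,\dots,b_k,c_1,\dots,c_k\}$: each $b_\ell$ gets $\lambda m$ first-place voters who predict value $1/2$ for $b_\ell$ and $1/2$ for $a$ (ranked second), and each $c_\ell$ gets $m^2$ first-place voters who predict value $1/(k+2)$ spread over $c_\ell$, $a$, and all the $b$'s. The $c_\ell$'s carry almost all of $n=\Theta(m^2(\lambda+m))$ in plurality, but their predicted welfare sits at the unit-sum floor $\plu(c_\ell)/m=\Theta(m)$, which is exactly the consistency threshold $\SW(a|\hat{\bv})/(\lambda m)=\Theta(\lambda m^2)/(\lambda m)$ --- so they are consistency-bad, not doubly safe. (Note your assumption $V^{*}=\Omega(n)$ fails here; $\SW(a|\hat{\bv})=\Theta(\lambda m^2)$ while $n=\Theta(m^3)$, and only the \emph{true} welfare of the benchmark $a$ after crushing reaches $\Omega(m^3)$.) The mechanism is thus forced onto $a$ (plurality $0$, unbounded robustness) or some $b_\ell$ (plurality $\lambda m$), and crushing $b_\ell$ gives $\SW(b_\ell|\bv)=\lambda$ against $\SW(a|\bv)\ge m^3/8$, i.e., robustness $\Omega(m^3/\lambda)$. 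To complete your argument you would need to supply a construction of this kind and verify the welfare and plurality calculations; the plan alone does not suffice.
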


\begin{proof}
Fix any $\lambda \in [1,m]$ and consider any mechanism with consistency $o(\lambda m)$. 
We consider an instance with an odd number $m = 1+2k \geq 5$ of candidates named
$\{a, b_1, \ldots, b_{k}, c_1, \ldots, c_{k}\}$. 
The ordinal preferences of the voters are as follows:
\begin{itemize}
    \item For each $\ell \in [k]$, there is a set $B_\ell$ of $\lambda m$ voters that rank $b_\ell$ first, $a$ second, then the candidates in $\{c_1, \ldots, c_{k}\}$ arbitrarily, and then the candidates in $\{b_1, \ldots, b_{k}\}\setminus\{b_\ell\}$ arbitrarily.
    
    \item For each $\ell \in [k]$, there is a set $C_\ell$ of $m^2$ voters that rank $c_\ell$ first, $a$ second, then the candidates in $\{b_1, \ldots, b_{k}\}$ arbitrarily, and then the candidates in $\{c_1, \ldots, c_{k}\}\setminus\{c_\ell\}$ arbitrarily.
\end{itemize}
Since $|\bigcup_{\ell \in [k]}\{B_\ell\}| = k \lambda m$ and $|\bigcup_{\ell \in [k]}\{C_\ell\}| = k m^2$, our instance consists of $n = km (\lambda + m)$ voters.
The predicted valuation profile $\hat{\bv}$ is as follows:
\begin{itemize}
    \item For each $\ell \in [k]$, the voters in $B_\ell$ have value $1/2$ for $b_\ell$, $1/2$ for $a$, and $0$ for the remaining candidates;
    \item For each $\ell \in [k]$, the voters in $C_\ell$ have value $\frac{1}{k+2}$ for each candidate in $\{c_\ell, a, b_1, \ldots, b_{k}\}$, and $0$ for the remaining candidates. 
\end{itemize}
Since $\frac{1}{2k} \leq \frac{1}{k+2} \leq \frac{1}{k}$ and $k = (m-1)/2 \geq m/4$, we have
\begin{align*}
    \SW(a | \hat{\bv}) &\geq k \lambda m \cdot \frac12 + k m^2\cdot \frac{1}{2k} \geq \frac18 \lambda m^2\\
    \SW(b_\ell | \hat{\bv}) &\leq \lambda m \cdot \frac12 + k m^2\cdot \frac{1}{k} \leq 2m^2, \forall \ell \in [k]\\ 
    \SW(c_\ell | \hat{\bv}) &\leq m^2\cdot \frac{1}{k} \leq 4m, \forall \ell \in [k]
\end{align*}
Clearly, if the mechanism chooses any candidate $c_\ell$, then its consistency would be at least $\lambda m /16$. So, to achieve consistency $o(\lambda m)$, it has to be the case that the mechanism chooses either the optimal predicted candidate $a$, or a candidate $b_\ell$ for some $\ell \in [k]$. We consider each case separately and show a lower bound on the robustness of the mechanism.

\medskip
\noindent 
{\bf Case 1: The mechanism chooses $a$.}
The true valuation profile $\bv$ might be such that all voters have value $1$ for the candidates they rank first and $0$ for the remaining ones. Hence, since $\plu(a) = 0$, $\SW(a | \bv) = 0$, while any other candidate $x \neq a$ has $\SW(x,|\bv) > 0$, leading to unbounded robustness. 

\medskip
\noindent 
{\bf Case 2: The mechanism chooses $b_\ell$ for some $\ell \in [k]$.}
The true valuation profile $\bv$ might be as follows: 
\begin{itemize}
    \item All voters in $B_\ell$ have value $1/m$ for all candidates;
    \item All other voters have value $1/2$ for their two most-preferred candidates. 
\end{itemize}
Under this profile, $b_\ell$ only gets value $1/m$ from each of the $\lambda m$ candidates in $B_\ell$, and thus $\SW(b_\ell | \bv) = \lambda$. One the other hand, since $a$ is ranked second by all voters in $\bigcup_{\ell \in [k]}\{C_\ell\}$ and $k \geq m/4$, its social welfare is $\SW(a|\bv) \geq k m^2 \cdot \frac12 \geq \frac{m^3}{8}$. Hence, the robustness is at least $m^3/(8\lambda)$.
\end{proof}

\subsubsection*{Prediction Error}
We conclude the section with an analysis of the distortion of Mechanism~\ref{mech:parameterized-Plurality-SW} as a function of a prediction error, which shows how the distortion changes from consistency (when there is no prediction error) to robustness (when the error is as large as possible). We first define what we precisely mean by prediction error for our problem. For any candidate $x \in A$ and valuation profile $\bv$, let 
\begin{align*}
    \rho(x|\bv) = \frac{\max_{y \in A} \SW_1(y|\bv)}{\SW_1(x|\bv)}
\end{align*}
be the approximation of the maximum first-place social welfare achieved by $x$ with respect to $\bv$. When $\bv$ is the true (unknown) valuation profile and $\hat{\bv}$ is the given predicted valuation profile, we define the prediction error $\eta(x)$ of a candidate $x$ as
\begin{align*}
    \eta(x) = \max \left\{ \frac{\rho(x|\bv)}{\rho(x|\hat{\bv})}, 1 \right\}.
\end{align*}
Essentially, the prediction error of $x$ quantifies (in a multiplicative way) how far away the true approximation of the maximum first-place social welfare achieved by $x$ is from the predicted approximation that $x$ achieves. Since our mechanism chooses a candidate from the set $S_\lambda = \{x \in A: \rho(x|\hat{\bv}) \leq \lambda\}$ of candidate that achieve a $\lambda$-approximation of the predicted maximum first-place social welfare, we are only in the worst prediction error of those candidates. Hence, we define $\eta := \max_{x \in S_\lambda} \eta(x)$. 

\begin{theorem} \label{thm:voting:error-analysis}
For any $\lambda \in [1,m]$, $\rho \in [1,\lambda]$ and $\eta \in \left[1, \frac{m^2}{\lambda \cdot \rho}\right]$, the distortion of Mechanism~\ref{mech:parameterized-Plurality-SW} is $m \cdot \eta \cdot \rho$.
\end{theorem}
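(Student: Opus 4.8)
The plan is to prove the bound in both directions: a short upper-bound argument that chains the inequalities of \cref{lem:voting:unit-sum:implications}, and a matching lower bound obtained by re-parameterising the instance of \cref{thm:voting:full-valuation:lower:tradeoff}. Let $w$ be the candidate returned by Mechanism~\ref{mech:parameterized-Plurality-SW} and let $o$ be an optimal candidate. Since $w \in S_\lambda$, writing $\rho = \rho(w|\hat{\bv})$ we have $\rho \in [1,\lambda]$, consistent with the hypothesis.

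For the upper bound I would apply \cref{lem:voting:unit-sum:implications}~(i) to $o$ and the right inequality of \cref{lem:voting:unit-sum:implications}~(iii) to $w$, and divide:
\begin{align*}
    \frac{\SW(o|\bv)}{\SW(w|\bv)} \le \frac{m\cdot \max_{y\in A}\SW_1(y|\bv)}{\SW_1(w|\bv)} = m\cdot\rho(w|\bv).
\end{align*}
The remaining task is to relate $\rho(w|\bv)$ to the prediction error. By the definition of $\eta(w)$ we have $\eta(w) \ge \rho(w|\bv)/\rho(w|\hat{\bv})$, hence $\rho(w|\bv) \le \eta(w)\,\rho$; and since $w \in S_\lambda$ we have $\eta(w) \le \eta$. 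Combining these gives distortion at most $m\eta\rho$. Finally, the hypothesis $\eta \le m^2/(\lambda\rho)$ ensures $m\eta\rho \le m^3/\lambda$, so this bound never exceeds the robustness guarantee of \cref{thm:voting:upper:parameterized-Plurality-SW} and is the operative one in this regime.

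For tightness I would re-tune the instance of \cref{thm:voting:full-valuation:lower:tradeoff}, keeping its three candidate groups — a single candidate $a$ ranked second by all voters, the $b_\ell$ candidates the mechanism is steered towards, and the $c_\ell$ candidates excluded by the consistency requirement — but adjusting the block sizes $|B_\ell|,|C_\ell|$ and the predicted values so that the returned candidate $w = b_\ell$ sits inside $S_\lambda$ at predicted approximation exactly $\rho$ and attains the maximum plurality score there, while its true profile is collapsed (scaled by a factor controlled by $\eta$) so that $\eta(w) = \eta$ and $a$ remains a high-welfare alternative. The target of the construction is $\SW(o|\bv)/\SW(w|\bv) = \Omega(m\eta\rho)$.

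The main obstacle is precisely this construction: the block cardinalities and the two valuation profiles must be chosen simultaneously so that (i) $w$ both survives the $S_\lambda$ filter and maximises plurality within it, so that it is genuinely the mechanism's output; (ii) the two derived quantities equal the prescribed $\rho$ and $\eta$ rather than merely bounding them; and (iii) the feasibility window $\eta \le m^2/(\lambda\rho)$ is exactly the condition that keeps the target distortion below the $m^3/\lambda$ robustness ceiling, so that no slack is wasted. Verifying that the chain of inequalities in \cref{lem:voting:unit-sum:implications} is tight up to constants on this instance — which forces $a$'s welfare to spread almost uniformly across the first-place groups while $w$'s true value nearly vanishes — is the step requiring the most care; the upper bound itself is immediate once the lemma is in hand.
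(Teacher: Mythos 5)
Your upper-bound chain is exactly the paper's: apply \cref{lem:voting:unit-sum:implications}~(i) to $o$ and (iii) to $w$ to get $\SW(o|\bv)/\SW(w|\bv) \le m\cdot\rho(w|\bv)$, then insert $\rho(w|\bv) \le \eta(w)\cdot\rho(w|\hat{\bv})$. That part is correct and is the same route.

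The genuine gap is in how you handle the range $\eta \in \bigl[1, \tfrac{m^2}{\lambda\rho}\bigr]$. You treat it purely as a hypothesis and use it only to check that $m\eta\rho$ never exceeds the $m^3/\lambda$ robustness ceiling. But $\eta$ is not a free parameter: it is determined by the instance, and the substantive content of the stated range is that the error of the chosen candidate \emph{cannot} exceed $\tfrac{m^2}{\lambda\cdot\rho(w|\hat{\bv})}$ --- this is what makes the theorem degrade gracefully from consistency to (exactly) robustness, as the discussion after the theorem relies on. The paper proves this: letting $a \in \arg\max_x \SW_1(x|\bv)$, it shows $\rho(w|\bv) \le m\cdot\plu(a)/\plu(w)$ via \cref{lem:voting:unit-sum:implications}~(ii) and~(iii), and then splits on whether $\plu(a) \le \plu(w)$ (giving $\eta(w)\le m$) or $\plu(a) > \plu(w)$ (in which case $a\notin S_\lambda$, so $\SW_1(a|\hat{\bv}) < \tfrac{1}{\lambda}\SW_1(a^*|\hat{\bv}) \le \tfrac1\lambda\plu(w)$, yielding $\plu(a)\le \tfrac{m}{\lambda}\plu(w)$ and hence $\eta(w)\le \tfrac{m^2}{\lambda\cdot\rho(w|\hat{\bv})}$). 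Without this step your argument proves a conditional bound but does not certify that the stated interval exhausts the possible errors, which is the part of the proof that actually uses the structure of $S_\lambda$ and the plurality-maximizing choice of $w$.

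Separately, your proposed lower-bound construction is not part of the paper's proof (the theorem is established only as an upper bound on the distortion as a function of the error), and as written it is a plan with acknowledged unresolved obstacles rather than an argument; it neither helps nor substitutes for the missing case analysis above.
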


\begin{proof}
By \cref{lem:voting:unit-sum:implications}~(i), the definition of $\rho(w|\bv)$, and since $\SW_1(w|\bv) \leq \SW(w|\bv)$, we have
\begin{align*}
    \SW(o | \bv) 
    &\leq m \cdot \max_{x \in A} \SW_1(x | \bv) \\
    &= m \cdot \rho(w | \bv) \cdot \SW_1(w | \bv) \\
    &= m \cdot \frac{\rho(w | \bv) }{\rho(w | \hat{\bv}) } \cdot \rho(w | \hat{\bv}) \cdot \SW_1(w | \bv) \\
    &\leq m \cdot \eta(w) \cdot \rho(w | \hat{\bv}) \cdot \SW(w | \bv).
\end{align*}
To bound the possible values of $\eta(w)$, let $a \in \arg\max_{x \in A} \SW_1(x|\bv)$ be a candidate that achieves the maximum true first-place social welfare. Since $\SW_1(a|\bv) \leq \plu(a)$ (by \cref{lem:voting:unit-sum:implications}~(ii)) and $\SW_1(w |\bv) \geq \frac{1}{m} \plu(w)$ (By \cref{lem:voting:unit-sum:implications}~(iii)), we have that $\rho(w|\bv) \leq m \cdot \frac{\plu(a)}{\plu(w)}$. Hence, 
\begin{align*}
\eta(w) = \frac{\rho(w|\bv)}{\rho(w|\hat{\bv})} \leq \frac{m \cdot \plu(a)}{\plu(w) \cdot \rho(w|\hat{\bv})}.
\end{align*}
We now switch between the following cases:
\begin{itemize}
    \item $\plu(a) \leq \plu(w)$. Since $\rho(w|\hat{\bv}) \geq 1$, we have that $\eta(w) \leq m$. 
    
    \item $\plu(a) > \plu(w)$. It must be the case that $a \not\in S_\lambda$, and thus 
    $\SW_1(a | \hat{\bv}) < \frac{1}{\lambda} \SW_1(a^* | \hat{\bv})$. Since $w$ is the candidate with maximum plurality score in $S_\lambda$ and $a^* \in S_\lambda$, by \cref{lem:voting:unit-sum:implications}~(ii) and~(iii), we have
    \begin{align*}
        \frac{1}{m} \plu(a) < \SW_1(a | \hat{\bv}) < \frac{1}{\lambda} \SW_1(a^* | \hat{\bv}) \leq \frac{1}{\lambda} \plu(a^*) \leq \frac{1}{\lambda} \plu(w),
    \end{align*}
    or, equivalently, $\plu(a) \leq \frac{m}{\lambda} \cdot \plu(w)$, which further implies that $\eta(w) \leq \frac{m^2}{\lambda \cdot \rho(w|\hat{\bv})}$. 
\end{itemize}
Since clearly $\rho(w|\hat{\bv}) \in [1,\lambda]$ by the definition of $w$, the distortion of the mechanism is $O(m \cdot \eta \cdot \rho)$ for any $\rho \in [1,\lambda]$ and $\eta \in \left[1, \frac{m^2}{\lambda \cdot \rho}\right]$. 
\end{proof}

Observe that \cref{thm:voting:error-analysis} recovers the consistency and the robustness bound of Mechanism~\ref{mech:parameterized-Plurality-SW}. In particular, when the prediction is accurate, then $\eta = 1$, and since $\rho \leq \lambda$, we get the consistency bound $O(\lambda m)$. When the prediction is inaccurate, then $\eta$ takes its maximum value of $\frac{m^2}{\lambda \cdot \rho}$, and we get the robustness bound of $O(m^3/\lambda)$. 

\section{One-sided Matching with Predictions}\label{sec:matching}
We now switch to the one-sided matching setting. Recall that here the goal is to compute a matching between a set of agents and a set of items. In the single-winner voting setting, we established that, even when given a prediction of the whole valuation profile, it impossible to achieve sublinear consistency while maintaining bounded robustness; see \cref{thm:voting:full-valuation:lower}. Contrary to that, in the one-sided matching setting, it turns out that given a prediction about the whole valuation profile is extremely powerful. In particular, it allows us to achieve a consistency of $1$ (i.e., we compute an optimal matching when the prediction is accurate), while still guaranteeing a robustness of $n^2$, which asymptotically matches the best possible distortion of ordinal mechanisms \citep{amanatidis2022few}.

The mechanism that achieves this optimal tradeoff simply chooses a matching that maximizes the predicted social welfare such that some item is matched to an agent that ranks it first. The following lemma establishes that such an optimal matching always exists, and it can be computed in polynomial time. 

\begin{lemma}\label{lem:opt-assigns-top-item}
Let $\mu^{*} \in \arg\max_{\mu} \SW(\mu | \bv)$ be any optimal matching according to some valuation profile $\bv$. Then, in polynomial time, $\mu^{*}$ can be transformed into another optimal matching $\hat{\mu}^{*} \in \arg\max_{\mu} \SW(\mu | \bv)$ such that for some agent $i \in N$, $\hat{\mu}^{*}(i)$ is $i$'s most-preferred item. 
\end{lemma}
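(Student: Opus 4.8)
The plan is to show that any optimal matching can be \emph{rotated} along a suitable cycle so that every agent on that cycle ends up with her most-preferred item, while the social welfare does not decrease. Since $\mu^{*}$ is already optimal, the rotated matching stays optimal and now assigns at least one agent her top item.

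First, for each agent $i \in N$ let $\top(i)$ denote $i$'s most-preferred item, which is well-defined from the ordinal profile. I would consider the functional directed graph on the agent set $N$ in which every agent $i$ points to the agent $f(i) := \mu^{*}(\top(i))$ that currently holds $i$'s most-preferred item under $\mu^{*}$. Because $N$ is finite and $f$ is a function, starting from any agent and repeatedly applying $f$ must eventually revisit a vertex, so the graph contains a directed cycle $i_1 \to i_2 \to \dots \to i_t \to i_1$; such a cycle is found in linear time by pointer chasing. Note that a self-loop $f(i_j) = i_j$ means $\mu^{*}(\top(i_j)) = i_j$, i.e. $i_j$ already holds her top item, in which case I am immediately done with $\hat{\mu}^{*} = \mu^{*}$.

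Next I would define the rotated matching $\hat{\mu}^{*}$ that agrees with $\mu^{*}$ off the cycle and sets $\hat{\mu}^{*}(i_j) = \top(i_j)$ for every $j$ on the cycle (indices taken modulo $t$). The key observation is that along the cycle $\top(i_j) = \mu^{*}(i_{j+1})$, because $i_{j+1} = f(i_j) = \mu^{*}(\top(i_j))$; hence the set of items $\{\top(i_1), \dots, \top(i_t)\}$ coincides with the set $\{\mu^{*}(i_1), \dots, \mu^{*}(i_t)\}$ of items currently held by the cycle agents, so $\hat{\mu}^{*}$ merely permutes these items among the cycle agents and is a valid one-to-one matching. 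I would then compare social welfares: off the cycle nothing changes, while each cycle agent $i_j$ moves from $\mu^{*}(i_j)$ to her most-preferred item $\top(i_j)$, so $v_{i_j}(\top(i_j)) \geq v_{i_j}(\mu^{*}(i_j))$. Summing over all agents gives $\SW(\hat{\mu}^{*} \mid \bv) \geq \SW(\mu^{*} \mid \bv)$. Since $\mu^{*}$ is optimal this must hold with equality, so $\hat{\mu}^{*}$ is optimal as well, and every agent on the cycle — in particular at least one agent — receives her most-preferred item.

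The main point to get right is that a single pairwise swap does \emph{not} suffice: giving agent $i$ her top item $x$ by exchanging it with whoever currently holds $x$ can strictly hurt that other agent and lower the welfare. The cycle-rotation is exactly what circumvents this obstacle, since it moves every agent on the cycle to a weakly better item \emph{simultaneously}, so no single welfare-decreasing term ever appears. The remaining verifications — that the construction runs in polynomial time (building $f$, chasing one cycle, and relabeling the cycle are all linear up to the cost of reading the input) and that the item set is preserved — are routine.
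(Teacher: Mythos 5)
Your proof is correct and follows essentially the same route as the paper's: build the directed graph where each agent points to the agent currently holding her most-preferred item, extract a cycle, and rotate the items along it, observing that the welfare can only weakly increase. Your write-up is slightly more explicit about why the rotation yields a valid matching and why a cycle must exist, but the underlying argument is identical.
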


\begin{proof}
If there exists some agent $i \in N$ such that $\mu^*(i)$ is her most-preferred item, then $\hat{\mu}^{*}=\mu^{*}$ and we are done. Therefore, suppose that, in $\mu^{*}$, every agent is matched to an item that she ranks as second or worse. Consider a graph $G$ in which there is a node $i$ for every agent $i \in N$, and an edge $(i,j)$ indicates that agent $\mu^{*}(j)$ is agent $i$'s most-preferred item. Since for every agent $i$ there is another agent $j$ that is matched to $i$'s most-preferred item, $G$ contains at least one cycle. 
We can now construct a new matching $\hat{\mu}^{*}$ by assigning to the agents that are involved in a cycle their optimal items. That is, take any cycle in $G$ and if $(i,j)$ is an edge of that cycle, then set $\hat{\mu}^{*}(i) = \mu^{*}(j)$.
Since $\mu^*$ and $\hat{\mu}^{*}$ differ only on the items that are matched to the agents of the considered cycle, and these agents are matched to their most-preferred items in $\hat{\mu}^{*}$, we have that $\SW(\hat{\mu}^{*}|\bv) \geq \SW(\mu^{*}|\bv)$, implying that $\hat{\mu}^*$ is optimal as well.   
\end{proof}

\begin{remark}\label{rem:optimal-matching}
Given \cref{lem:opt-assigns-top-item}, in the rest of this section, when we refer to an optimal matching (or a partial optimal matching), we will assume that at least one agent is matched to her most-preferred item. Furthermore, if this matching is optimal with respect to a predicted valuation profile $\hat{\bv}$, it will maintain this property with respect to the true valuation profile $\bv$, since the ordinal preferences of the agents are induced by both $\bv$ and $\hat{\bv}$ by definition. 
\end{remark}

Using \cref{lem:opt-assigns-top-item}, the following theorem is almost immediate.

\begin{theorem}\label{thm:matching-full-valuation-profile-as-prediction}
Given as prediction a full valuation profile $\hat{\bv}$, the mechanism that outputs an optimal matching with respect to $\hat{\bv}$ achieves consistency $1$ and robustness at most $n^2$.  
\end{theorem}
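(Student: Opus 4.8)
The plan is to treat consistency and robustness separately, with essentially all of the real work already delegated to \cref{lem:opt-assigns-top-item} and \cref{rem:optimal-matching}. For consistency, the argument is immediate: when the prediction is accurate we have $\hat{\bv} = \bv$, so the matching $\mu$ returned by the mechanism is by construction an optimal matching with respect to the true profile. Hence $\SW(\mu | \bv) = \max_{\mu'} \SW(\mu' | \bv)$ and the consistency ratio equals $1$.

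For robustness, I would bound the numerator and the denominator of the distortion ratio separately, under an arbitrary (and possibly grossly inaccurate) prediction $\hat{\bv}$. The numerator, $\max_{\mu'} \SW(\mu' | \bv)$, is at most $n$: by the unit-sum normalization every agent has value at most $1$ for any single item, so any matching of the $n$ agents attains social welfare at most $n$. The key ingredient for the denominator is \cref{rem:optimal-matching}: the matching $\mu$ output by the mechanism, being optimal with respect to $\hat{\bv}$, can be taken so that at least one agent $i$ is matched to her most-preferred item, and this property is inherited under the true profile $\bv$ precisely because both $\bv$ and $\hat{\bv}$ induce the same ordinal preferences.

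Given such an agent $i$, I would invoke the unit-sum assumption once more: since $i$'s values sum to $1$ over the $n$ items, her value for her most-preferred item is at least $1/n$, so $\SW(\mu | \bv) \geq v_i(\mu(i)) \geq 1/n$. Combining the two estimates gives a robustness of at most $n / (1/n) = n^2$, as claimed. I do not expect a genuine obstacle here, since the structural content has been front-loaded into the preceding lemma, and what remains is just the elementary pair of unit-sum bounds. The only point deserving explicit care is the citation of \cref{rem:optimal-matching} to justify that the ``matched-to-top-item'' property transfers from $\hat{\bv}$ to $\bv$; without it, the denominator could a priori be arbitrarily close to zero and the robustness bound would collapse.
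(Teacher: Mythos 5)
Your proposal is correct and follows essentially the same route as the paper: consistency is immediate from optimality under an accurate prediction, and robustness comes from pairing the trivial upper bound $\SW(\mu'|\bv) \leq n$ with the lower bound $\SW(\mu|\bv) \geq 1/n$ obtained from \cref{lem:opt-assigns-top-item} and the unit-sum normalization. Your explicit remark that the ``matched-to-top-item'' property transfers from $\hat{\bv}$ to $\bv$ because both induce the same ordinal profile is exactly the point the paper delegates to \cref{rem:optimal-matching}.
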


\begin{proof}
Clearly, if the predicted valuation profile $\hat{\bv}$ is accurate, then the mechanism outputs an optimal matching according to the true valuation profile $\bv$, and thus the consistency is $1$. For the robustness, when $\hat{\bv} \neq \bv$, by \cref{lem:opt-assigns-top-item}, the matching $\hat{\mu}^*$ that the mechanism outputs is such that at least one agent is matched to her most-preferred item. By the unit-sum normalization, this implies that $\SW(\hat{\mu}^*|\bv) \geq 1/n$, and since $\SW(\mu|\bv) \leq n$ for any matching $\mu$, the robustness bound follows. 
\end{proof}

Of course, having access to the whole valuation profile as a prediction may be too generous. Given this, we next consider the case where only part of the valuation profile can be predicted, namely the values of the agents for the items they rank in the first $k$ positions, for some $k \in [n]$. To simplify our notation, we denote by $\top_\ell(i)$ the item that agent $i$ ranks at position $\ell \in [m]$, and by $\bv_k$ the {\em $k$-truncated valuation profile} that consists of the value $v_i(x)$ for every agent $i$ for any $x \in \bigcup_{\ell=1}^x \top_\ell(i)$. 

We first show a positive result by appropriately modifying the aforementioned mechanism. In particular, given an ordinal profile $\succ$ and as prediction a $k$-truncated valuation profile $\hat{\bv}_k$, our mechanism first completes $\hat{\bv}_k$ by setting the undefined values to $0$, and then outputs an optimal matching according to that valuation profile. See Mechanism~\ref{mech:k-welfare-max}. The next theorem shows that the consistency improves linearly in $k$, without compromising any robustness.

\SetCommentSty{mycommfont}
\begin{algorithm}[ht]
\SetNoFillComment
\caption{}
\label{mech:k-welfare-max}
\SetKwInOut{Input}{Input}
\Input{Ordinal profile $\succ$, predicted $k$-truncated valuation profile $\hat{\bv}_k$}
\For{each agent $i \in N$ and $\ell \in \{k+1,\ldots,n\}$}
{
$\hat{v}_{i, \top_\ell(i)} \gets 0$\;
}
\Return $\arg\max_{\mu} \SW(\mu|\hat{\bv})$\; 
\end{algorithm}

\begin{theorem}\label{thm:matching:upper}
Mechanism~\ref{mech:k-welfare-max} achieves consistency at most $n/k+2$ and robustness at most $n^2$.
\end{theorem}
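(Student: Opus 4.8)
The plan is to treat the two bounds separately, with consistency being the substantive part. Throughout, write $\hat{\mu}$ for the matching returned by the mechanism and $\mu^*$ for a true welfare-maximizing matching, and set $O = \SW(\mu^*|\bv)$ and $W = \SW(\hat{\mu}|\bv)$.

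For robustness, the argument mirrors \cref{thm:matching-full-valuation-profile-as-prediction}. By \cref{lem:opt-assigns-top-item} and \cref{rem:optimal-matching} I may assume $\hat{\mu}$ matches at least one agent to her most-preferred item; since the ordinal preferences induced by $\hat{\bv}$ and $\bv$ coincide, this agent is matched to her top item under $\bv$ as well, so by unit-sum normalization her value is at least $1/n$. Hence $W \geq 1/n$, while $O \leq n$ since each agent contributes at most $1$, which gives robustness at most $n^2$.

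For consistency, assume the prediction is accurate, so $\hat{v}_i(x) = v_i(x)$ whenever $x$ is among $i$'s top $k$ items and $\hat{v}_i(x) = 0$ otherwise; in particular $\hat{v}_i(x) \leq v_i(x)$ for all $i,x$, and therefore $W = \SW(\hat{\mu}|\bv) \geq \SW(\hat{\mu}|\hat{\bv}) = \max_\mu \SW(\mu|\hat{\bv})$. I would then partition the agents according to $\mu^*$ into the set $H$ of agents matched to one of their top $k$ items and the set $L$ of the remaining agents, and decompose $O = A + B$ with $A = \sum_{i \in H} v_i(\mu^*(i))$ and $B = \sum_{i \in L} v_i(\mu^*(i))$. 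Bounding $A$ is immediate: restricting $\mu^*$ to $H$ is a matching using only top-$k$ edges, on which $\hat{\bv}$ agrees with $\bv$, so $A \leq \max_\mu \SW(\mu|\hat{\bv}) \leq W$.

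The crux is bounding $B$, and here the naive estimate $B \leq n/(k+1)$ (each tail item has value at most $1/(k+1)$) is too weak, being only additive rather than scaling with $W$. Instead I would establish the multiplicative bound $B \leq (n/k)\,W$ through an averaging argument: the expected value of $\SW(\mu|\hat{\bv})$ over a uniformly random perfect matching equals $\tfrac{1}{n}\sum_i \sum_{\ell=1}^k v_i(\top_\ell(i))$, so $\max_\mu \SW(\mu|\hat{\bv})$ is at least this average. For each $i \in L$ the item $\mu^*(i)$ is ranked below position $k$, so each of $i$'s top $k$ items is at least as valuable, giving $\sum_{\ell=1}^k v_i(\top_\ell(i)) \geq k\,v_i(\mu^*(i))$. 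Summing over $L$ yields $\max_\mu \SW(\mu|\hat{\bv}) \geq (k/n)\,B$, hence $W \geq (k/n)\,B$. Combining the two estimates gives $O = A + B \leq W + (n/k)\,W = (1 + n/k)\,W \leq (n/k + 2)\,W$, which is the claimed consistency bound. The main obstacle is precisely isolating this multiplicative estimate for the tail contribution $B$; once it is in place, the rest is routine.
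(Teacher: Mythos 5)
Your proof is correct, and the robustness half is identical to the paper's. The consistency half, however, takes a genuinely different route for the key step. The paper also splits the optimal welfare into the contribution of agents matched (under $\mu^*$) inside versus outside their top $k$ positions, and handles the inside part exactly as you handle $A$; but for the outside part (your $B$, the paper's set $S$) it partitions $S$ into $\lceil |S|/k\rceil$ groups of at most $k$ agents and, for each group $G$, exhibits a partial matching $\mu_G$ that greedily assigns every agent in $G$ one of her top-$k$ items (possible since $|G|\le k$ and each such item is at least as valuable as $\mu^*(i)$), then charges each group against $\SW(\mu|\bv)$ separately, yielding $\lceil |S|/k\rceil\,\SW(\mu|\bv)\ge \sum_{i\in S}v_i(\mu^*(i))$. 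You replace this explicit group-by-group construction with an averaging argument: a uniformly random perfect matching has expected $\hat{\bv}$-welfare $\frac1n\sum_i\sum_{\ell=1}^k v_i(\top_\ell(i))\ge \frac{k}{n}B$, so the maximizer does too. Both arguments rest on the same underlying fact (an agent in $L$ has $k$ items each at least as good as her optimal one), but yours sidesteps the existence claim for the partial matchings $\mu_G$, which the paper only sketches, and gives the marginally sharper constant $1+n/k$ in place of $|S|/k+2$. The one point worth making explicit in your write-up is that $\max_\mu\SW(\mu|\hat{\bv})$ must be attained by a matching that assigns some agent her top item (via \cref{lem:opt-assigns-top-item} and \cref{rem:optimal-matching}) for the robustness bound, which you do note; everything else checks out.
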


\begin{proof}
Let $\mu^{*}$ be the optimal matching according to the true valuation profile $\bv$, 
and let $\mu$ be the matching computed by the mechanism when given the $k$-truncated valuation profile $\hat{\bv}_k$ as prediction. First consider the case where the prediction is accurate, that is, when $\hat{\bv}_k = \bv_k$. 
Let $S = \{i \in N: \mu^*(i) \not\in \bigcup_{\ell=1}^k \top_\ell(i) \}$ be the set of agents whose optimal items are not among those they rank in the first $k$ positions. Clearly, if $S=\varnothing$, then $\mu$ is by construction a welfare-maximizing matching and the consistency is $1$.

We now partition the set $S$ into $\lceil |S|/k \rceil$ sets $G_1, \ldots, G_{\lceil |S|/k \rceil}$ of agents of size $k$ each, with the exception of the last group which may have size smaller than $k$. For each set $G \in \{G_1, \ldots, G_{\lceil |S|/k \rceil} \}$, there exists a partial matching $\mu_G$ that involves only the agents in $G$ and assigns to them (at most) $k$ items, such that $v_i(\mu_G(i)) \geq v_i(\mu^{*}(i))$ for every agent $i \in G$.
The existence of this matching is guaranteed by the fact that there are at most $k$ agents in $G$, and none of them is matched to any of the items she ranks in the first $k$ positions in $\mu^{*}$. 
By construction, $\mu$ achieves at least as much social welfare as any other matching when restricted to the first $k$ positions, and hence we have that 
\begin{align*}
\SW(\mu|\bv) \geq \sum_{i \in G} v_i(\mu_G(i)) \geq \sum_{i \in G} v_i(\mu^{*}(i)),
\end{align*}
By Summing over all these inequalities (for all groups) we obtain that
\begin{align*}
\left(\frac{|S|}{k}+1\right)\SW(\mu|\bv) \geq \left\lceil\frac{|S|}{k}\right\rceil\SW(\mu|\bv) \geq \sum_{i \in S} v_i(\mu^{*}(i)).
\end{align*}
Since $\mu$ is the optimal matching according to $\bv_k$, we also have that
\begin{align*}
\SW(\mu|\bv) \geq \sum_{i \in N\setminus S} v_i(\mu^{*}(i)). 
\end{align*}
Putting everything together, we obtain
\begin{align*}
    \SW(\mu^*|\bv) \leq \left( \frac{|S|}{k}+2 \right)\cdot \SW(\mu|\bv),
\end{align*}
and the consistency bound follows since $|S| \leq n$.

For the robustness, the argument is similar to that in the proof of \cref{thm:matching-full-valuation-profile-as-prediction}. By the definition of the mechanism, the computed matching $\mu$ is such that some agent is matched to her most-preferred item. Hence, the achieved social welfare is at least $1/n$. Since the maximum possible social welfare is at most $n$, the robustness bound follows. 
\end{proof}

Before we proceed, we parameterize the distortion guarantee of Mechanism~\ref{mech:k-welfare-max} as a function of an appropriately defined prediction error, as we did in Section~\ref{sec:voting} for Mechanism~\ref{mech:parameterized-Plurality-SW} (\cref{thm:voting:error-analysis}).
Let $\mu^*_k \in \max_\mu \SW(\mu|\bv_k)$ and $\hat{\mu}^*_k \in \max_\mu \SW(\mu|\hat{\bv}_k)$ be the (partial) matchings that maximize the social welfare according to the true and the predicted $k$-truncated valuation profiles, assuming that the undefined are $0$, respectively. We define the prediction error as the ratio of the true social welfare achieved by these two matchings, that is, 
\begin{align*}
    \eta = \frac{\SW(\mu_k^*|\bv)}{\SW(\hat{\mu}_k^*|\bv)}.
\end{align*}
Intuitively, the prediction error measures how far the predicted optimal matching is from the actual optimal matching, when restricted to the $k$ first positions of the agents rankings. 

\begin{theorem}\label{thm:matching:upper:error-analysis}
For any $\eta \in [1,n^2]$, the distortion of Mechanism~\ref{mech:k-welfare-max} is $O(\min\{\frac{n}{k}\cdot\eta, n^2 \})$.
\end{theorem}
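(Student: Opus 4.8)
The bound $O(n^2)$ is immediate: it is exactly the robustness guarantee established in \cref{thm:matching:upper}, and it holds for every instance irrespective of the value of $\eta$. So the entire task reduces to proving that the distortion is also $O\big(\tfrac{n}{k}\cdot\eta\big)$; taking the minimum of the two bounds then yields the claim.

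The plan is to replay the consistency analysis of \cref{thm:matching:upper}, but to measure everything against the \emph{true} $k$-truncated optimum $\mu_k^*$ (the maximizer of $\SW(\cdot\,|\bv_k)$) rather than against the mechanism's output directly, and to insert the prediction error only at the very last step. The one structural fact I would record first is that the matching $\mu$ returned by Mechanism~\ref{mech:k-welfare-max} coincides with $\hat{\mu}_k^*$: maximizing $\SW(\cdot\,|\hat{\bv})$ over full matchings, where $\hat{\bv}$ is the prediction padded with zeros beyond position $k$, is the same as maximizing $\SW(\cdot\,|\hat{\bv}_k)$. Consequently $\SW(\mu|\bv) = \SW(\hat{\mu}_k^*|\bv)$, and the definition of the error gives the clean identity $\SW(\mu_k^*|\bv) = \eta\cdot\SW(\mu|\bv)$.

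Next I would decompose the true optimum exactly as in the consistency proof. Let $\mu^*$ be a true optimal matching and let $S = \{i : \mu^*(i) \notin \bigcup_{\ell=1}^{k}\top_\ell(i)\}$ be the agents whose optimal item lies outside their top $k$. For the agents $i \notin S$, their optimal item is among their first $k$, so $\sum_{i \notin S} v_i(\mu^*(i)) = \SW(\mu^*|\bv_k) \leq \SW(\mu_k^*|\bv_k) \leq \SW(\mu_k^*|\bv)$, the first inequality by optimality of $\mu_k^*$ for $\bv_k$ and the second since padding with the true beyond-$k$ values can only increase welfare. For the agents in $S$, I would reuse the grouping argument verbatim: partition $S$ into $\lceil |S|/k\rceil$ groups of size at most $k$, and for each group $G$ invoke Hall's condition (each of the at most $k$ agents has $k$ acceptable top-$k$ items, so every subset $T \subseteq G$ sees at least $k \geq |T|$ items) to obtain a partial matching $\mu_G$ within $G$ using only top-$k$ items with $v_i(\mu_G(i)) \geq v_i(\mu^*(i))$ for all $i \in G$. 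Extending $\mu_G$ arbitrarily to a full matching and comparing with $\mu_k^*$ gives $\sum_{i \in G} v_i(\mu^*(i)) \leq \SW(\mu_k^*|\bv_k) \leq \SW(\mu_k^*|\bv)$ for every group, and summing over the $\lceil |S|/k\rceil \leq \tfrac{|S|}{k}+1$ groups yields $\sum_{i\in S} v_i(\mu^*(i)) \leq \big(\tfrac{|S|}{k}+1\big)\SW(\mu_k^*|\bv)$.

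Adding the two parts gives $\SW(\mu^*|\bv) \leq \big(\tfrac{|S|}{k}+2\big)\SW(\mu_k^*|\bv)$, and substituting the identity $\SW(\mu_k^*|\bv) = \eta\,\SW(\mu|\bv)$ together with $|S|\leq n$ and $n/k \geq 1$ gives the distortion bound $\big(\tfrac{n}{k}+2\big)\eta = O\big(\tfrac{n}{k}\cdot\eta\big)$. I expect the only real subtlety --- the step most in need of care --- to be bookkeeping the two different $k$-truncated optima: all the combinatorial work (the grouping and Hall argument) must be anchored to the true $k$-optimum $\mu_k^*$, so that the prediction influences the estimate through the single ratio $\eta$ and nowhere else; conflating $\mu_k^*$ with the mechanism's $\hat{\mu}_k^*$ inside the grouping step would either double-count or lose the error factor.
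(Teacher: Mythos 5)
Your proposal is correct and follows essentially the same route as the paper's proof: bound $\SW(\mu^*|\bv)$ by a $O(n/k)$ multiple of $\SW(\mu_k^*|\bv)$ via the size-$k$ grouping argument, then convert to $\SW(\mu|\bv)$ through the single factor $\eta$, and combine with the trivial $n^2$ robustness bound. The only (immaterial) differences are that you partition only the set $S$ and treat $N\setminus S$ separately (the paper partitions all of $N$, getting $\frac{n}{k}+1$ instead of $\frac{n}{k}+2$), and that you spell out the Hall-type existence argument for $\mu_G$ that the paper leaves implicit.
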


\begin{proof}
Let $\mu^*$ be the optimal matching according to the true valuation profile $\bv$. 
We partition the agents into $\lceil n/k \rceil$ groups $G_1, \ldots, G_{n/k}$ of size $k$ each, with the only exception of the last group whose size might be smaller than $k$. Clearly, for every group $G \in \{G_1, \ldots, G_{n/k}\}$, the optimal matching $\mu_G$ of the agents in $G$ to the items that they rank in the first $k$ positions is such that 
\begin{align*}
    \sum_{i \in G} v_i(\mu_G(i)) \geq \sum_{i \in G} v_i(\mu^*(i)).
\end{align*}
Due to the definition of $\mu_k^*$ and by summing over all groups, we have that 
\begin{align*}
\left(\frac{n}{k}+1\right)\SW(\mu^*_k|\bv) \geq \left\lceil\frac{n}{k}\right\rceil\SW(\mu^*_k|\bv) \geq 
\sum_{G} \sum_{i \in G} v_i(\mu_G(i)) \geq \sum_{i \in N} v_i(\mu^*(i))
\end{align*}
Consequently, by the definition of $\eta$ and the definition of the mechanism, which outputs the matching $\mu = \hat{\mu}^*_k$, we obtain
\begin{align*}
\SW(\mu^*|\bv) \leq \left(\frac{n}{k}+1\right)\SW(\mu^*_k|\bv) = \left(\frac{n}{k}+1\right) \cdot \eta \cdot \SW(\mu|\bv).
\end{align*}
By the unit-sum normalization, we also clearly have that $\SW(\mu|\bv) \geq 1/n$ since at least one agent is matched to her most-preferred item (see Remark~\ref{rem:optimal-matching}), and $\SW(\mu^*|\bv), \SW(\mu^*_k|\bv) \leq n$. Therefore, the distortion is $O(\min\{\frac{n}{k}\cdot\eta, n^2\})$ for any $\eta \in [1,n^2]$, as claimed. 
\end{proof}

Observe that \cref{thm:matching:upper:error-analysis} recovers the consistency and robustness guarantees of Mechanism~\ref{mech:k-welfare-max}. The robustness is clearly at most $n^2$. When the precidion is accurate, then $\eta=1$, and the consistency is $n/k$.

Next, we show that the consistency-robustness tradeoff achieved by Mechanism~\ref{mech:k-welfare-max} is asymptotically best possible for any mechanism that is given a $k$-truncated valuation profile as prediction. The lower bound of $\Omega(n^2)$ on the consistency of such mechanisms follows directly from a corresponding lower bound on the distortion of purely ordinal mechanisms (which ignore the prediction altogether) shown by \citet{amanatidis2022few}.  

\begin{theorem}[\citep{amanatidis2022few}(Theorem 1)]\label{thm:matching:lower:robustness}
For any $k \in [n]$, the robustness of any matching mechanism that is given a $k$-truncated valuation profile as prediction is $\Omega(n^2)$. 
\end{theorem}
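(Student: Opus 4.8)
The plan is to obtain this bound as an immediate consequence of the purely ordinal distortion lower bound of $\Omega(n^2)$ for one-sided matching established by \citet{amanatidis2022few}, without re-deriving any hard instance from scratch. The crucial structural observation is that robustness is a supremum over \emph{all} valuation profiles \emph{and all} predictions, with no requirement whatsoever that the prediction be accurate. I would exploit this freedom to neutralize the extra cardinal information that a $k$-truncated prediction could carry: rather than granting the mechanism a genuinely informative prediction, I supply it with one that is determined by the ordinal profile alone, thereby collapsing it to a purely ordinal mechanism to which the existing lower bound already applies.

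Concretely, fix an arbitrary matching mechanism $\mech$ that receives an ordinal profile together with a $k$-truncated valuation profile. I would first fix a canonical map $\succ \mapsto p_0(\succ)$ that assigns to each ordinal profile some $k$-truncated valuation profile consistent with $\succ$ (for instance, unit-sum values placed on the top $k$ items in a manner that respects each agent's ranking). This induces a purely ordinal mechanism $\mech_{\mathrm{ord}}(\succ) := \mech(\succ, p_0(\succ))$, whose output depends on the ordinal profile only. Invoking the $\Omega(n^2)$ lower bound on ordinal mechanisms, there is a valuation profile $\bv$ with $\succ_\bv = \succ$ on which $\mech_{\mathrm{ord}}$ has distortion $\Omega(n^2)$, that is,
\begin{align*}
    \frac{\max_\mu \SW(\mu|\bv)}{\SW(\mech(\succ_\bv, p_0(\succ_\bv))|\bv)} = \Omega(n^2).
\end{align*}
Since $p_0(\succ_\bv)$ is by construction a $k$-truncated profile consistent with $\succ_\bv$, it is a legitimate (merely inaccurate) prediction for $\bv$, so the pair $(\bv, p_0(\succ_\bv))$ is admissible in the supremum defining $\robustness(\mech)$. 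Hence $\robustness(\mech) = \Omega(n^2)$.

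The only point needing care is that the prediction I feed in must be a well-formed $k$-truncated profile that is ordinally consistent with the hard instance, so that the mechanism cannot detect it as spurious; this is never a genuine obstacle, as any ranking admits a consistent unit-sum completion on its top $k$ items. I do not expect a hard step here: the substantive content—that some ordinal profile forces every deterministic mechanism into an outcome which the adversary can subsequently punish by an appropriate choice of the hidden cardinal values—is entirely supplied by the construction of \citet{amanatidis2022few}. The reduction above simply transports that construction into the learning-augmented setting, leveraging the fact that the definition of robustness imposes no accuracy constraint on the prediction and therefore permits the adversary to render the supplied cardinal information uninformative.
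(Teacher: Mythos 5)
Your reduction is correct and is precisely the formalization of what the paper does: the paper states the theorem as an import of Theorem 1 of \citet{amanatidis2022few}, noting only that the $\Omega(n^2)$ bound "follows directly" from the purely ordinal lower bound because robustness quantifies over all (possibly inaccurate) predictions. Fixing a canonical ordinally-consistent prediction $p_0(\succ)$ to collapse $\mech$ into an ordinal mechanism, and observing that the resulting pair $(\bv, p_0(\succ_\bv))$ is admissible in the supremum defining $\robustness(\mech)$, is exactly the intended argument, and your caveat about well-formedness of the truncated profile is handled correctly.
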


For the consistency, we have the following statement. 

\begin{theorem}\label{thm:matching:lower:consistency}
For any $k \in [n]$, the consistency of any matching mechanism that is given a $k$-truncated valuation profile as prediction is $\Omega(n/k)$.
\end{theorem}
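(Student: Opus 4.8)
The plan is to prove this lower bound via an \emph{indistinguishability/adversary} argument. Since $\consistency(\mech)$ is a supremum over true profiles $\bv$ for which the prediction is accurate (i.e.\ $\bv_k=\hat{\bv}_k$), I would fix a single ordinal profile $\succ$ together with a single $k$-truncated prediction $\hat{\bv}_k$, and note that every deterministic mechanism must output one fixed matching $\mu=\mech(\succ,\hat{\bv}_k)$ on this input. It therefore suffices to exhibit, for each such $\mu$, a \emph{completion} $\bv$ of $\hat{\bv}_k$ (consistent with $\succ$ and with $\bv_k=\hat{\bv}_k$, so that $\bv\rhd\hat{\bv}_k$) for which $\max_{z}\SW(z\mid\bv)\ge\Omega(n/k)\cdot\SW(\mu\mid\bv)$. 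I would assume $k\le n/2$ throughout, since for $k>n/2$ the claimed bound $\Omega(n/k)$ is $\Omega(1)$ and trivial.

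The construction I would aim for combines \textbf{congestion} with \textbf{tail-magnitude ambiguity}. The congestion component forces that, in \emph{any} matching, at most $\Theta(k)$ agents can be served by items lying in their top $k$ positions (for instance by letting a small pool of size $\Theta(k)$ carry essentially all the ``top'' value that is shared among many agents), so that $\Omega(n)$ agents are necessarily pushed onto items beyond their top $k$, exactly as in the set $S$ of the proof of \cref{thm:matching:upper}. The ambiguity component concerns how the remaining unit mass $1-\sum_{\ell\le k}\hat v_i(\top_\ell(i))$ of each agent is distributed over the tail positions: the prediction pins down only the top‑$k$ values, while the tail is free to be \emph{concentrated} near position $k+1$ (so that an item beyond the top $k$ is worth up to $v_i(\top_k(i))$) or \emph{spread out} (so that each such item is worth almost nothing). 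After seeing $\mu$, the adversary would select a matching $\nu$ built from tail‑candidate edges that is essentially edge‑disjoint from $\mu$, and then choose the completion that concentrates each agent's tail mass along $\nu$. Concentrating along $\nu$ should make the optimal matching collect value $\Theta(1/k)$ from each of $\Omega(n)$ agents, giving $\max_z\SW(z\mid\bv)=\Omega(n/k)$, while $\mu$ (being disjoint from $\nu$) collects positive value only from the $O(k)$ congested top‑positions and sees all of its tail picks driven to nearly $0$.

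The steps I would carry out, in order, are: (i) specify $\succ$ and $\hat{\bv}_k$ with the congested top structure and a regular ``tail‑candidate'' incidence (e.g.\ a circulant in which agent $i$'s positions $k+1,\dots,2k$ cycle through the non‑top items), and verify unit‑sum normalization; (ii) given an arbitrary output $\mu$, produce $\nu$ from the tail‑candidate edges avoiding $\mu$'s edges, using a Hall/expansion argument on the tail‑candidate bipartite graph (its min‑degree remains $\ge k-1\ge 1$ after deleting the single edge per agent used by $\mu$), and let the adversary concentrate exactly the agents matched by $\nu$; (iii) check that the resulting $\bv$ is a valid accurate completion, i.e.\ that the concentrated tail values are non‑increasing and bounded by $v_i(\top_k(i))$ as required by the monotonicity of $\succ_{\bv}$; (iv) lower‑bound $\max_z\SW(z\mid\bv)\ge\SW(\nu\mid\bv)=\Omega(n/k)$ and upper‑bound $\SW(\mu\mid\bv)=O(1)$, combining the congestion bound on top positions with the near‑disjointness of $\mu$ and $\nu$. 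Where convenient I would adapt the ordinal construction of \citet{amanatidis2022few} underlying \cref{thm:matching:lower:robustness} as the tail gadget, scaling its parameters so that revealing the top‑$k$ values reduces its welfare‑ambiguity by a $\Theta(k/n)$ factor.

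The main obstacle is the \textbf{monotonicity coupling} between the revealed ranking and the hidden values: because $\succ_{\bv}$ already fixes the \emph{order} of each agent's values, the mechanism can always match an agent to its highest‑ranked available item, and the only freedom left to the adversary is the \emph{magnitude} profile along the tail, which is capped by the known top‑$k$ values. Consequently, any attempt by the adversary to make $\mu$'s chosen edge worthless by de‑concentrating an agent simultaneously removes that agent's usefulness from the optimal matching; the delicate point is to design the congestion and the tail incidence so that, no matter how $\mu$ tries to ``shadow'' the concentration (e.g.\ by self‑matching agents to their own tail candidates), the edge‑disjoint $\nu$ still serves $\Omega(n)$ agents while $\mu$ retains only $O(k)$ useful edges. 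Getting this balance right — so that helping the optimum cannot also help $\mu$ — is exactly what pins the bound at $\Theta(n/k)$ and matches the upper bound of \cref{thm:matching:upper}.
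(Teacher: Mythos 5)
Your plan is essentially the paper's own argument: a shared pool of $k$ top items creating congestion so that $\Omega(n)$ agents must be matched outside their top $k$, followed by an adaptive completion that concentrates each unserved agent's residual $1/(k+1)$ mass at position $k+1$ while spreading the mass of agents the mechanism did serve there, yielding $\SW(\mu\mid\bv)=O(1)$ versus an alternative matching of welfare $\Omega(n/k)$. The paper implements the "edge-disjoint $\nu$" step with a simpler gadget than your circulant-plus-Hall argument — agents come in pairs that share a single item at position $k+1$, so at most one of each pair can receive it and the other is automatically available for concentration — which sidesteps the degree issues your expansion argument would face for small $k$.
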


\begin{proof}
The lower bound is straightforward for any $k > n/2$, so consider an instance with an even number $n \geq 2k$ of agents and $n$ items called 
$\{a_1, \ldots, a_k, b_1, \ldots, b_{n/2}, c_1, \ldots, c_{n/2-k}\}.$ 
The ordinal profile is such that, for any odd $i \in [n-1]$, agents $i$ and $i+1$ have the same ranking defined as:
$$a_1 \succ_i \ldots \succ_i a_k \succ_i b_{\ceil{i/2}} \succ_i [\{b_1, \ldots, b_{n/2}, c_1, \ldots, c_{n/2-k}\}\setminus \{b_{\ceil{i/2}}\}],$$
where $[X]$ denotes any arbitrary ranking of the elements of set $X$. The $k$-truncated valuation profile $\hat{\bv}_k$ which is given as prediction is such that each agent has value $1/(k+1)$ for each of the $k$ items she ranks first. To show a lower bound on the consistency, we assume that $\hat{\bv_k}$ is the true $k$-truncated valuation profile $\bv_k$.

Clearly, since all agents rank the items $a_1, \ldots, a_k$ in the same order in the first $k$ positions, we can assume without loss of generality, that these items are assigned to the last $k$ agents, that is, each agent $i \in \{n-k+1, \ldots, n\}$ is matched to one of these $k$ items. Observe now that, for any odd $i \in [n-k]$, agents $i$ and $i+1$ rank the same item $b_{\ceil{i/2}}$ at position $k+1$, but only (at most) one of them can be matched to it in any matching. To complete the valuation profile $\bv$, we set the values of agents $i$ and $i+1$ depending on whether they are assigned to $b_{\ceil{i/2}}$ in the matching computed by the mechanism:
\begin{itemize}
    \item Agent $i$ is matched to $b_{\ceil{i/2}}$: Agent $i$ has value $\frac{1}{(n-k)(k+1)}$ for each item she ranks at positions $\ell \in \{k+1, \ldots, n\}$, whereas agent $i+1$ has value $1/(k+1)$ for $b_{\ceil{i/2}}$ and $0$ for the remaining items. 
    
    \item Agent $i+1$ is matched to $b_{\ceil{i/2}}$: Agent $i$ has value $1/(k+1)$ for $b_{\ceil{i/2}}$ and $0$ for the remaining items, whereas agent $i+1$ has value $\frac{1}{(n-k)(k+1)}$ for each item she ranks at positions $\ell \in \{k+1, \ldots, n\}$.
    
    \item Neither agent $i$ nor agent $i+1$ is matched to $b_{\ceil{i/2}}$: Both agent $i$ and agent $i+1$ have value $1/(k+1)$ for $b_{\ceil{i/2}}$ and $0$ for the remaining items.
\end{itemize}
Given this valuation profile $\bv$, any possible matching $\mu$ that is computed by the mechanism has a social welfare of
\begin{align*}
    \SW(\mu|\bv) \leq \frac{k}{k+1} + \frac{n-k}{(n-k)(k+1)} = 1.
\end{align*}
On the other hand, there is another matching $\mu^*$ that, for each odd $i \in [n-k]$, assigns item $b_{\ceil{i/2}}$ to an agent among $i$ and $i+1$ that has value $1/(k+1)$ for it, while the other is assigned to an arbitrary item from the set 
$\{c_1, \ldots, c_{n/2-k}\}$. The social welfare of this matching is
\begin{align*}
    \SW(\mu^*|\bv) = \frac{k}{k+1} + \frac{n-k}{2}\cdot \frac{1}{k+1} + \frac{n-k}{2} \cdot \frac{1}{(n-k)(k+1)} \geq \frac{n+1}{2(k+1)},
\end{align*}
thus establishing the lower bound of $\Omega(n/k)$ on the consistency. 
\end{proof}

\section{Conclusion and Open Problems}
In this work, we studied the distortion of deterministic mechanisms for two fundamental social choice problems (that of single-winner voting and one-sided matching) under the recently introduced learning-augmented framework, and showed tight consistency-robustness tradeoffs for both of these settings. In particular, our results paint a complete picture about the distortion guarantees of mechanisms that are given as input the ordinal preferences of the agents over the alternative outcomes and a prediction about their underlying values. 

As a future direction, it would be interesting to consider \emph{hybrid} settings where, besides the ordinal profile, a part of the cardinal profile is provided as accurate information (or can be elicited similarly to models in prior works, e.g., \citep{ABFV21,amanatidis2022few,ABFV24,ES25}) and another part is given as unreliable predicted information. To this end, we present here an impossibility for the voting setting. The following theorem (asymptotically) strengthens \cref{thm:voting:full-valuation:lower}, as it establishes that, for a constant $k$, even if a mechanism has access to the top-$k$ values of all agents (and is given the rest of the valuation profile as a prediction), it cannot achieve sublinear consistency while maintaining bounded robustness. It further implies that, to achieve sublinear consistency $O(m^{1-\varepsilon})$ for some $\varepsilon > 0$ and bounded robustness, the mechanism must have access to $\Omega(\varepsilon \log{m})$-top values per agent. 

\begin{theorem} \label{thm:voting:queries:lower}
For any $k \in [\log{m}]$, let $\mech$ be some voting mechanism that is given as input the ordinal preferences of the agents, the true values of the agents for the candidates they rank in the first $k$ positions, and a prediction of the values of the agents for the remaining candidates. Then, $\mech $ cannot simultaneously achieve consistency $o(m/2^k)$ and bounded robustness. 
\end{theorem}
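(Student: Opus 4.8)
The plan is to generalize the construction behind \cref{thm:voting:full-valuation:lower}, which is exactly the $k=0$ case, by prepending $k$ ``uninformative'' top layers that absorb the extra power the mechanism gains from seeing the true top-$k$ values. The whole argument rests on an indistinguishability principle: if a consistency instance $\bv$ (with accurate prediction $\hat{\bv}=\bv$) and a robustness instance with true profile $\bv'$ (but the \emph{same} prediction $\hat{\bv}=\bv$, now inaccurate) share the same ordinal profile and the same true values in the top $k$ positions of every voter, then $\mech$ receives identical input on both — the ordinal profile, the true top-$k$ values, and the prediction $\hat{\bv}$ — and so must return the same candidate. It therefore suffices to build one consistency instance on which consistency $o(m/2^k)$ forces the output to be a specific candidate $o$, together with a companion true profile, agreeing on the ordinal profile and the top-$k$ values, on which $\SW(o\,|\,\bv')=0$.

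For the consistency instance I would use $m-1$ ``decoy'' candidates $d_1,\dots,d_{m-1}$ and one special candidate $o$, with $n=r(m-1)$ voters. Each voter ranks $k+1$ distinct decoys in her top $k+1$ positions, then $o$ at position $k+2$, then the remaining decoys arbitrarily; the occurrences are balanced by a cyclic design (voter $i$ ranks $d_i,\dots,d_{i+k}$ modulo $m-1$ on top) so that every decoy sits at each position $j\in\{1,\dots,k+1\}$ for exactly $r$ voters. The values are geometric: voter $i$ assigns $2^{-j}$ to the candidate she ranks at position $j$ for $j\le k+1$, assigns $2^{-(k+1)}$ to $o$ at position $k+2$, and $0$ below. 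This is unit-sum since $\sum_{j=1}^{k+1}2^{-j}+2^{-(k+1)}=1$, and order-consistent (non-increasing, with a tie between positions $k+1$ and $k+2$). A direct computation gives $\SW(o\,|\,\bv)=n\,2^{-(k+1)}$ and $\SW(d\,|\,\bv)=\frac{n}{m-1}\big(1-2^{-(k+1)}\big)$ for every decoy, so outputting any decoy incurs distortion at least $\SW(o\,|\,\bv)/\SW(d\,|\,\bv)=\Theta(m/2^k)$; hence a mechanism with consistency $o(m/2^k)$ must output $o$ here.

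For robustness I would keep the ordinal profile and the top-$k$ values identical (so the input, and thus the output $o$, is unchanged) and only alter the beyond-$k$ values of the true profile $\bv'$: each voter moves all her remaining mass onto her position-$(k+1)$ decoy and assigns $0$ to $o$ and everything below. This is feasible precisely because the geometric values make the free mass $1-\sum_{j=1}^{k}2^{-j}=2^{-k}$ exactly equal the capacity $v_i^{(k)}=2^{-k}$ of the single position-$(k+1)$ slot; the result is still unit-sum and order-consistent, with $v'_i(o)=0$ for all $i$. Thus $\SW(o\,|\,\bv')=0$ while a decoy keeps positive welfare, and since $\mech$'s input is unchanged it still outputs $o$, giving unbounded robustness. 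Taking $k\approx\varepsilon\log m$ then yields the stated corollary on the number of top values needed for sublinear consistency.

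The main obstacle — and the reason the $k=0$ argument does not lift verbatim — is that fixing the true top-$k$ values constrains \emph{both} ends at once: forcing the output needs $o$ to have comparatively large welfare, while destroying robustness needs $o$ to be \emph{zeroable} under the fixed top-$k$ mass and unit-sum constraint. A candidate at position $k+1$ cannot be zeroed — its free mass has nowhere to go but the tail, forcing $\SW(o\,|\,\bv')\ge n\,2^{-k}/(m-k)$ and only \emph{bounded} robustness — so $o$ must sit at position $k+2$ behind exactly one filler slot, and the geometric profile is what makes the free mass fit into that slot precisely. The second delicate point is keeping every decoy's welfare at $\Theta(n/m)$ despite some carrying the large value $2^{-1}$ at position $1$; this is why decoys must be \emph{reused} across positions via the balanced design rather than dedicated to a single position, which would exceed the candidate budget. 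I would also check the boundary separately: the bound is vacuous once $2^{k+1}\gtrsim m$, i.e.\ $k\approx\log m$, where $o(m/2^k)=o(1)$ consistency is already impossible.
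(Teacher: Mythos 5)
Your proposal is correct and follows essentially the same route as the paper's proof: the same cyclic balanced design with $m-1$ decoy candidates each occupying every one of the first $k+1$ positions for $n/(m-1)$ voters, $o$ fixed at position $k+2$, geometric values $2^{-\ell}$ making the welfare ratio $\frac{m-1}{2^{k+1}-1}$ force the output to be $o$, and then an indistinguishable true profile that shifts the residual mass $2^{-k}$ onto the position-$(k+1)$ candidate so that $\SW(o\mid\bv')=0$. Your explicit statement of the indistinguishability principle and of why $o$ must sit at position $k+2$ (rather than $k+1$) makes precise what the paper leaves implicit, but the construction and calculations are the same.
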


\begin{proof}
Consider any mechanism in the class defined in the statement and has consistency $o(m/2^k)$. We define the following instance consisting of $n$ voters and $m$ candidates that we call $\{a_1, \ldots, a_{m-1},o\}$. The voters are partitioned into $m-1$ sets $S_1,\ldots,S_{m-1}$ of size $n/(m-1)$ each. The ordinal profile is such that, for any $i \in [m-1]$ and $\ell \in [k+1]$, candidate $a_i$ is ranked at position $\ell$ by all voters in set $S_{i+\ell-1 \Mod{m-1}}$. Furthermore, all voters rank candidate $o$ at position $k+2$, and the remaining candidates (that do not appear in the first $k+1$ positions of their preferences) arbitrarily. See \cref{tab:voting:queries:lower} for an example with $n=8$, $m=5$, and $k=2$. 

\begin{table}[t]
    \centering
    \begin{tabular}[h!]{cc} \hline
    Set & Ranking \\\hline
    $S_1$ & $a_1 \succ a_4  \succ a_3 \succ o \succ a_2$ \\
    $S_2$ & $a_2 \succ a_1 \succ a_4 \succ o \succ a_3$ \\
    $S_3$ & $a_3 \succ a_2 \succ a_1 \succ o \succ a_4$ \\
    $S_4$ & $a_4 \succ a_3 \succ a_2 \succ o \succ a_1$ \\\hline
    \end{tabular}
    \caption{An example of the ordinal preferences in the instance considered in the proof of \cref{thm:voting:queries:lower} with $n=8$, $m=5$ and $k=2$. Each of the sets $S_1$, $S_2$, $S_3$ and $S_4$ include two voters that have the same ranking over the candidates.}
    \label{tab:voting:queries:lower}
\end{table}

The $k$ \reliable queries reveal that every voter $i$ has value $2^{-\ell}$ for the candidate that $i$ ranks at position $\ell \in [k]$. The remaining $m-k$ \unreliable queries reveal that every voter $i$ has value $2^{-(k+1)}$ for the candidates that $i$ ranks at position $k+1$ and $k+2$, and value $0$ for the remaining candidates. Let $\hat{\bv}$ be the valuation profile that consists of the aforementioned values which are revealed by the \reliable and the \unreliable queries. Observe that the values of each agent according to $\hat{\bv}$ satisfy the unit-sum assumption: The first $k+1$ values form a geometric progression that starts from $1/2$ and has a decay rate of $1/2$; hence, the sum of the first $k+1$ values is $\sum_{\ell=1}^{k+1} 2^{-\ell} = \frac{\frac12 \left(1 - 2^{-(k+1)} \right)}{1-\frac12} = 1-2^{-(k+1)}$. 
We have that 
\begin{align*}
    \SW(o | \hat{\bv}) &= n \cdot 2^{-(k+1)} \\
    \SW(a_j | \hat{\bv}) &=  \frac{n}{m-1} \cdot \sum_{\ell=1}^{k+1} 2^{-\ell} = \frac{n}{m-1}\cdot(1-2^{-(k+1)}), \forall j \in [m-1].
\end{align*}
Since, for any $j \in [m-1]$,
\begin{align*}
    \frac{\SW(o | \hat{\bv})}{\SW(a_j | \hat{\bv})} = \frac{n \cdot 2^{-(k+1)}}{\frac{n}{m-1}\cdot(1-2^{-(k+1)})} = \frac{m-1}{2^{k+1}-1},
\end{align*} 
to achieve consistency $o(m/2^k)$, the mechanism must choose candidate $o$.

Now consider a different valuation profile $\bv$ that agrees with the answers of the $k$ \reliable queries, but is such that every voter $i$ has value $2^{-k}$ for the candidate that she ranks at position $k$, and $0$ for the remaining candidates. Observe that $\bv$ satisfies the unit-sum assumption for each voter, since now the first $k$ values form a geometric progression. Since $\SW(o|\bv) = 0$ and $\SW(a_j | \bv) > 0$ for any $j \in [m-1]$, the robustness is unbounded. 
\end{proof}

\subsection*{Acknowledgments}
Aris Filos-Ratsikas was supported by the UK Engineering and Physical Sciences Research Council (EPSRC) grant EP/Y003624/1. 

\bibliographystyle{plainnat}
\bibliography{refs}

\end{document}